\newtheorem{theorem}{Theorem}%[section]
\theoremstyle{plain}
\newtheorem{lemma}[theorem]{Lemma}
\newtheorem{corollary}[theorem]{Corollary}
\theoremstyle{definition}
\newtheorem{definition}[theorem]{Definition}
\newtheorem{remark}[theorem]{Remark}
\newtheorem{claim}{Claim}
\title{Graphical One-Sided Markets with Exchange Costs}
\author{
	Takashi Ishizuka\\
	Artificial Intelligence Laboratory, Fujitsu Limited, Japan\\
	\texttt{ishizuka-t@fujitsu.com}
}
\begin{document}

\maketitle

\begin{abstract}
This paper proposes a new one-sided matching market model in which every agent has a cost function that is allowed to take a negative value. Our model aims to capture the situation where some agents can profit by exchanging their obtained goods with other agents. We formulate such a model based on a graphical one-sided matching market, introduced by Massand and Simon \cite{MS19}.
We investigate the existence of stable outcomes for such a market. We prove that there is an instance that has no core-stable allocation. On the other hand, we guarantee the existence of two-stable allocations even where exchange costs exist.
However, it is $\mathtt{PLS}$-hard to find a two-stable allocation for a market with exchange costs even if the maximum degree of the graph is five.
\end{abstract}

\section{Introduction}
\label{Section_Introduction}
Allocation problems have been well-studied in economics and computer sciences.
It is known that various related problems in a wide range of fields: school choice problems \cite{APR05, APRS05, AT03}, course allocation problems \cite{Bud11, BC12, SU10}, kidney exchange problems \cite{RSU04, RSU05, RSU07}, and so on.
A stable allocation of indivisible items to rational agents is a solution concept of allocation problems; it captures the feature that any agent cannot strictly increase their utility by exchanging the assigned items with other agents.
Essentially, a standard allocation problem supposes that every agent has a preference list, which is a (strictly) total order over all items, and any transaction of items among agents after allocation yields no additional reward.

This paper focuses on the agents' behavior after allocation.
We are interested in cases where some agents profit by exchanging the assigned goods with other agents.
Specifically, we are intrigued by the behavior of agents who join a market to be rewarded by selling the obtained goods in another market despite having no preferences for goods handled in the first market.
Analyzing such an agent's behavior is motivated by real-world issues and phenomena.
Over the past three decades, the global network has grown rapidly, making it easier to conduct transactions through the network. Online platforms make it easy for anyone to sell or buy goods; one does not need to be a professional seller. In such a society, agents without preferences sometimes appear in a market for a limited number of products (such as games and plastic models, etc.) sold by lottery due to the availability of a production line.

We introduce exchange costs to deal with the above scenario.
It is natural to suppose that any exchange is realized via a mediator, and any agent needs to pay an intermediation fee to exchange her item.
We allow exchange costs to take negative values.
It seems unnatural because a negative cost means that a mediator pays the cost to an agent.
However, it is a key assumption to consider a situation where an agent benefits from an exchange.
A negative cost represents such a reward. Note that a mediator receives a cost from another agent.
Thus, a mediator will broker the exchange if the total cost is non-negative. 

This paper formulates a new one-sided matching problem where each agent has a cost function.
Our problem, {\it a graphical one-sided matching problem with exchange cost}, is based on a graphical one-sided market introduced by Massand and Simon \cite{MS19}.
We discuss the existence of stable allocations for our model. After that, we consider the computational complexity of finding a stable outcome.

\subsection{Our Contributions}
We are interested in the existence of core-stable allocation for a graphical one-sided matching problem with exchange cost. We first present that there is a graphical one-sided matching problem with exchange cost that does not have core-stable allocations. In particular, we show that there is an instance that has no $6$-stable allocation.
After that, we discuss the computational complexity of checking whether a given allocation is stable for a graphical one-sided matching problem with exchange cost. We prove that such a decision problem is $\mathtt{coNP}$-complete even without exchange cost.

We guarantee the existence of a $2$-stable allocation for a graphical one-sided matching problem with exchange cost. Our existence proof closely follows from the corresponding proof of Massand and Simon \cite{MS19}. We show that their proof also works for a problem with exchange cost.
Furthermore, we show the $\mathtt{PLS}$-completeness of finding a $2$-stable allocation for a graphical one-sided matching problem with exchange cost even if the maximum degree of the underlying player graph is at most five. 

Finally, we show that there is a polynomial-time algorithm to find a core-stable allocation for a non-graphical one-sided matching problem with exchange cost.

\subsection{Related Works}
To the best of our knowledge, we are unaware of past research on matching problems that allowed exchange costs to take negative values.

The computational aspects of matching problems have been widely studied.
For the house allocation problem, Shapley and Scarf \cite{SS74} have proposed the Top Trading Cycle algorithm, which outputs an assignment satisfying Pareto optimality, individual rationality, and strategy-proofness in polynomial time.
Hylland and Zeckhauser \cite{HZ79} have proposed a Pareto optimal, envy-free mechanism for one-sided matching markets. Unfortunately, their mechanism is not truthful.
Svensson \cite{Sve99} has proven that the Random Serial Dictatorship is the only truthful mechanism that holds ex-post Pareto optimality, anonymity, and non-bossiness.
The optimal seat arrangement problem \cite{BHJOOZ20} is very similar to the graphical one-sided matching problem. The key difference between these two models is that every agent does not have valuations for items in the optimal seat arrangement problem. Very recently, Ceylan, Chen, and Roy \cite{CCR23} have studied the computational aspects of the optimal seat arrangement problem.

A complexity class related to total search problems is useful to capture the complexity of finding a solution whose existence is guaranteed.
The complexity class $\mathtt{TFNP}$, introduced in \cite{MP91}, comprises all total search problems such that we can efficiently verify the correctness of any output solution.
Some allocation problems have been characterized in their computational complexity by subclasses of $\mathtt{TFNP}$.
Massand and Simon \cite{MS19} have proven that the problem of seeking a $2$-stable allocation for a symmetric graphical one-sided matching problem is $\mathtt{PLS}$-complete.
Cs\'{a}ji \cite{Csa22} has shown the $\mathtt{PPAD}$-completeness for computing a stable fractional matching on a hypergraphic preference system even if the maximum degree is three and the maximum edge size is three.
Recently, Goldberg, H{\o}gh, and Holledner \cite{GHH23} have shown the $\mathtt{PLS}$-completeness of computing an EFX allocation for identical agents with a submodular valuation function.

There are various scenarios of allocation problems that have local structures.
For instance, Elkind, Patel, Tsang, and Zick \cite{EPTZ20} have considered a land allocation problem where every agent has the incentive to live in a house near their friends.
In a many-to-one setting, it has been discussed that the fair allocation for each agent to a connected component of a given graph where each pair of components is disjoint \cite{BCEIP17}.

In game theory, it has also been well-studied that the computational aspects of multi-player games with a local structure.
A \textit{graphical game}, introduced by Kearns, Littman, and Sigh \cite{KLS01}, is a succinctly represented multi-player game in which players' interactions are represented by a graph, and each player's payoff depends on only actions selected by itself and its neighbors.
An \textit{identical interest polymatrix game} is well-known as a subclass of graphical games that always have pure Nash equilibria.
Cai and Daskalakis \cite{CD11} have proven the $\mathtt{PLS}$-completeness of computing a pure Nash equilibrium on an identical interest polymatrix game. 
Apt, de Keijzer, Rhan, Sch\"{a}fer, and Simon \cite{AKRSS17} investigated the existence of strong Nash equilibria in a special case of identical interest game. They have shown that there is an instance that has no strong equilibrium.

\section{Preliminaries}
We denote by $\mathbb{Z}$ and $\mathbb{R}$ the sets of integers and real numbers, respectively.
For an integer $a$ in $\mathbb{Z}$, we define $\mathbb{Z}_{\ge a} := \{ x \in \mathbb{Z} ~;~ x \ge a \}$ and $\mathbb{Z}_{> a} := \{ x \in \mathbb{Z} ~;~ x > a \}$.
Similarly, for a real number $a$ in $\mathbb{R}$, we define $\mathbb{R}_{\ge a} := \{ x \in \mathbb{R} ~;~ x \ge a \}$ and $\mathbb{R}_{> a} := \{ x \in \mathbb{R} ~;~ x > a \}$. 
We use $[n] := \{ 1, 2, \cdots n \}$ for every positive integer $n$ in $\mathbb{Z}_{> 0}$.
Let $X$ be a finite set. We denote by $|X|$ the number of elements in $X$.

\subsection{Graphical One-Sided Matching Problems with Exchange Costs}
This section presents the resource allocation problem that we study in this paper, which we call the {\it graphical one-sided matching problem with exchange cost}. Our model is based on the graphical one-sided market introduced by Massand and Simon \cite{MS19}.
As in a typical one-sided matching problem, we have a finite set of agents $N = [n]$, a finite set of items $A$, and a valuation function $v_{i}: A \to \mathbb{R}_{\ge 0}$ for each agent $i$ in $N$ in our model. We say that a valuation function $v_{i}: A \to \mathbb{R}_{\ge 0}$ is uniform if there is a non-negative real value $\hat{v} \in \mathbb{R}_{\ge 0}$ such that $v_{i}(a) = \hat{v}$ for every item $a \in A$. Throughout this paper, we assume that the numbers of agents and items are the same, i.e., $|A| = |N|$.

We suppose that items and agents in our model have local neighborhood structures, respectively.
An item graph is a simple unweighted undirected graph $\mathcal{I} = (A, \lambda)$.
An agent graph is an edge-weighted undirected graph\footnote{In the original definition in \cite{MS19}, an agent graph is defined as a simple edge-weighted digraph. This paper only focuses on the symmetric case, the undirected setting, because we are interested in the existence of core-stable allocation. Massand and Simon \cite{MS19} have shown the non-existence of $2$-stable allocation for the asymmetric setting.} $\mathcal{A} = (N, \tau, \{ w: \tau \to \mathbb{R}_{> 0} \})$, where $w$ is a function that assigns each edge $\{ i, j \}$ in $\tau$ to a positive weight $w(\{ i, j \})$ in $\mathbb{R}_{> 0}$. We say that a weight function $w: \tau \to \mathbb{R}_{> 0}$ is uniform if there exists a positive real value $\hat{w} \in \mathbb{R}_{> 0}$ such that $w(e) = \hat{w}$ for every edge $e \in \tau$.
We assume that both $\mathcal{I}$ and $\mathcal{A}$ do not have self-loops.

Each agent $i$ in $N$ has a cost function $c_{i}: A \times A \to \mathbb{R}$. Here, the value $c_{i}(a, b)$ means the cost for the agent $i$ to pay when she parts with the item $a$ and receives the item $b$.
Note that our model admits that the value $c_{i}(a, b)$ takes a negative value. We say that the cost function $c_{i}$ is trivial if for each pair of items $a, b \in A$, $c_{i}(a, b) = 0$.

An allocation $\pi: N \to A$ is a bijection that assigns agents to items. We denote by $\Pi$ the set of all allocations.
The utility function $u_{i}: \Pi \to \mathbb{R}$ of an agent $i$ in $N$ is defined by $u_{i}(\pi) = v_{i}(\pi(i)) + \sum_{ \{j, i\} \in \tau ~:~ \{ \pi(j), \pi(i) \} \in \lambda } w( \{ j, i \})$ for an allocation $\pi: N \to A$.

A coalition $X$ is a non-empty subset of agents in $N$. A coalitional exchange $(X, \mu)$ is a pair of a coalition $X \subseteq N$ and a bijection $\mu: X \to X$ satisfying that $\mu(x) \neq x$ for every $x$ in $X$.
Let $\pi: N \to A$ be an allocation.
For any coalitional exchange $(X, \mu)$, we denote by $\pi^{(X, \mu)}$ the allocation such that for each $x \in X$, $\pi^{(X, \mu)}(x) = \pi(\mu(x))$ and for each $y \in N \setminus X$, $\pi^{(X, \mu)}(y) = \pi(y)$.
We say that a coalitional exchange $(X, \mu)$ is permissible under $\pi$ if it holds that $\sum_{x \in X} c_{x}(\pi(x), \pi^{(X, \mu)}(x)) = \sum_{x \in X} c_{x}(\pi(x), \pi(\mu(x))) \ge 0$.
Given an allocation $\pi: N \to A$, we say that a permissible coalitional exchange $(X, \mu)$ under $\pi$ is blocking $\pi$ if the allocation $\pi^{(X, \pi)}$ satisfies that for every agent $x \in X$, $u_{x}(\pi^{(X, \mu)}) - c_{x}(\pi(x), \pi^{(X, \mu)}(x)) > u_{x}(\pi)$.
Let $k \ge 2$ be a positive integer.
An allocation $\pi: N \to A$ is called a $k$-stable allocation if there is no permissible coalitional exchange $(X, \mu)$ with $|X| \le k$ under $\pi$ that is blocking $\pi$.

Specifically, we call a $|N|$-stable allocation a core-stable allocation.
In the rest of this paper, a permissible coalition $(X, \mu)$ with $|X| = 2$ under an allocation $\pi$ is called a permissible pair under $\pi$. 
We simply denote by $i \leftrightarrow j$ a coalitional exchange $(\{ i, j \}, \mu)$.
Given an allocation $\pi: N \to A$ and a coalitional exchange $i \leftrightarrow j$, we denote by $\pi^{i \leftrightarrow j}$ the allocation satisfying that $\pi^{i \leftrightarrow j}(i) = \pi(j)$, $\pi^{i \leftrightarrow j}(j) = \pi(i)$, and $\pi^{i \leftrightarrow j}(k) = \pi(k)$ for every $k \in N \setminus \{ i, j \}$.
We say that a coalitional exchange $(X, \mu)$ with $|X| = 2k$ for some positive integer $k$ is a pairwise exchange if there are two non-empty subsets $X_{i} = \{ i_{1}, i_{2}, \dots, i_{k} \}$ and $X_{j} = \{ j_{1}, j_{2}, \dots, j_{k} \}$ of $X$ such that $X = X_{i} \cup X_{j}$, $X_{i} \cap X_{j} = \emptyset$, and for each $\ell \in [k]$, $\mu(i_{\ell}) = j_{\ell}$ and $\mu(j_{\ell}) = i_{\ell}$. We write this by $\{ i_{\ell} \leftrightarrow j_{\ell} \}_{\ell \in [k]}$.

\subsubsection{Computational Problems}
\label{Section_Problem_Formulation}
We formulate computational problems with respect to graphical one-sided matching problems with exchange costs.

\begin{definition}[\textsc{Exist $k$-Stable Allocation}]
	Let $k \ge 2$ be some constant integer.
	The problem \textsc{Exist $k$-Stable Allocation} is defined as follows: Given a graphical one-sided matching problem with exchange cost $\mathcal{M} = \langle \mathcal{A} = (N, \tau, \{ w: \tau \to \mathbb{Z}_{> 0} \}), \mathcal{I} = (A, \lambda), \{ v_{i}: A \to \mathbb{Z}_{\ge 0} \}_{i \in N}, \{ c_{i}: A \times A \to \mathbb{Z} \}_{i \in N} \rangle$, decide whether there exists a $k$-stable allocation $\pi: N \to A$ for $\mathcal{M}$.
\end{definition}

\begin{definition}[\textsc{Check Stable Allocation}]
	The problem \textsc{Check Stable Allocation} is defined as follows: Given a graphical one-sided matching problem with exchange cost $\mathcal{M} = \langle \mathcal{A} = (N, \tau, \{ w: \tau \to \mathbb{Z}_{> 0} \}), \mathcal{I} = (A, \lambda), \{ v_{i}: A \to \mathbb{Z}_{\ge 0} \}_{i \in N}, \{ c_{i}: A \times A \to \mathbb{Z} \}_{i \in N} \rangle$, an allocation $\pi: N \to A$, and a positive integer $k \in \{2, 3, \dots, |N| \}$, decide whether the allocation $\pi$ is a $k$-stable allocation for $\mathcal{M}$.
\end{definition}

\begin{definition}[\textsc{Check Stable Allocation without Exchange Cost}]
	The problem \textsc{Check Stable Allocation} is defined as follows: Given a graphical one-sided matching problem with exchange cost $\mathcal{M} = \langle \mathcal{A} = (N, \tau, \{ w: \tau \to \mathbb{Z}_{> 0} \}), \mathcal{I} = (A, \lambda), \{ v_{i}: A \to \mathbb{Z}_{\ge 0} \}_{i \in N} \rangle$, an allocation $\pi: N \to A$, and a positive integer $k \in \{2, 3, \dots, |N| \}$, decide whether the allocation $\pi$ is a $k$-stable allocation for $\mathcal{M}$.
\end{definition}

\begin{definition}[\textsc{Find $k$-Stable Allocation}]
	Let $k \ge 2$ be some constant integer.
	The problem \textsc{Find $k$-Stable Allocation} is defined as follows: Given a graphical one-sided matching problem with exchange cost $\mathcal{M} = \langle \mathcal{A} = (N, \tau, \{ w: \tau \to \mathbb{Z}_{> 0} \}), \mathcal{I} = (A, \lambda), \{ v_{i}: A \to \mathbb{Z}_{\ge 0} \}_{i \in N}, \{ c_{i}: A \times A \to \mathbb{Z} \}_{i \in N} \rangle$, find a $k$-stable allocation $\pi: N \to A$ for $\mathcal{M}$.
\end{definition}

\subsection{Coordination Games on Graphs}
A strategic form game $\mathcal{G} := \langle N, (S_{i})_{i \in N}, (p_{i})_{i \in N} \rangle$ consists of a set $N = [n]$ of $n \in \mathbb{Z}_{> 1}$ players, a non-empty set $S_{i}$ of pure strategies, and a payoff function $p_{i}: S_{1} \times S_{2} \times \cdots \times S_{n} \to \mathbb{R}$ for each player $i \in N$.
A strategy profile is an $n$-tuple of pure strategies $(s_{i})_{i \in N}$, where $s_{i} \in S_{i}$ for each player $i \in N$.
We denote by $\mathbb{S}_{N} := S_{1} \times S_{2} \times \cdots \times S_{n}$ the set of strategy profiles.
A coalition is a non-empty subset $X \subseteq N$ of players.

In this paper, we define a {\it network coordination game} as a subclass of identical interest polymatrix games.
A network coordination game $\mathcal{G}$ consists of an undirected graph $G = (V, E)$ without self-loop, a positive integer $m \in \mathbb{Z}_{> 0}$, and a non-empty subset of integers $S_{i} \subseteq [m]$ for each vertex $i \in V$.
Each vertex on the graph $G$ represents a player in the game $\mathcal{G}$.
For each player $i \in V$, a non-empty subset $S_{i}$ represents a pure strategy set of the player $i$.
Given a strategy profile $s = (s_{i})_{i \in V} \in \mathbb{S}_{V}$, the payoff $p_{i}(s)$ of each player $i \in V$ is defined as $| \{ \{ i, j \} \in E ~;~ s_{i} = s_{j} \}|$.

Let $\mathcal{G} = \langle G = (V, E), m,  (S_{i})_{i \in V} \rangle$ be a network coordination game, and let $k$ be a positive integer $k \in [|V|]$.
Given two strategy profiles $s = (s_{i})_{i \in V}$ and $t = (t_{i})_{i \in V}$ and a coalition $K \subseteq V$, we say that $t$ is a deviation of the players in $K$ from $s$ if $K = \{ i \in V ~;~ s_{i} \neq t_{i} \}$ and $s_{j} = t_{j}$ for every player $j \in V \setminus K$. We denote this by $s \mapsto_{K} t$. A deviation $s \mapsto_{K} t$ is profitable if $p_{i}(s) < p_{i}(t)$ for each player $i \in K$.
A strategy profile $s = (s_{i})_{i \in V}$ is called a $k$-equilibrium in $\mathcal{G}$ if there is no profitable deviation $s \mapsto_{K} t$ with the size $|K| \le k$.

\subsubsection{Computational Problems}
We formulate computational problems with respect to network coordination games.

\begin{definition}[\textsc{Exist $k$-Equilibrium}]
	Let $k$ be some constant positive integer.
	The problem \textsc{Exist $k$-Equilibrium} is defined as follows: Given a network coordination game $\mathcal{G} = \langle G = (V, E), m, (S_{i})_{i \in V} \rangle$, decide whether there is a $k$-equilibrium in $\mathcal{G}$.
\end{definition}

\begin{definition}[\textsc{Check Equilibrium}]
	The problem \textsc{Exist Equilibrium} is defined as follows: Given a network coordination game $\mathcal{G} = \langle G = (V, E), m, (S_{i})_{i \in V} \rangle$, a strategy profile $s = (s_{i})_{i \in V} \in \mathbb{S}_{V}$, and a positive integer $k \in [|V|]$, decide whether the strategy profile $s$ is a $k$-equilibrium in $\mathcal{G}$.
\end{definition}

\subsection{Complexity Classes}
We denote by $\Sigma$ a finite set of symbols. Each finite sequence of symbols in $\Sigma$ is called a string. That is, for each string $s$ with respect to $\Sigma$, there is a positive integer $n$ such that $s = s_{1} s_{2} \cdots s_{n}$, where $s_{i} \in \Sigma$ for every $i \in [n]$. We denote by $\Sigma^{*}$ a set of all finite strings, and define by $\Sigma^{n}$ a set of all strings of length $n$.

Let $R \subseteq \Sigma^{*} \times \Sigma^{*}$ be a relation.
We say that a relation $R$ is polynomially balanced if there exists a constant $c$ such that for all $(x, y)$ in $R$, $|y| \le |x|^{c}$. A relation $R$ is polynomial-time decidable if it is polynomial-time computable to decide whether a given pair of strings $(x, y) \in \Sigma^{*} \times \Sigma^{*}$ belongs to $R$.
Finally, we say that a relation $R$ is total if, for all $x \in \Sigma^{*}$, there exists at least one string $y \in \Sigma^{*}$ such that $(x, y) \in R$.

\subsubsection{Decision Problems}
Let $L \subseteq \Sigma^{*}$ be a language. 
The decision problem with respect to $L$ is defined as follows: Given a string $x \in \Sigma^{*}$, decide whether the given string $x$ is in $L$.

The complexity class $\mathtt{NP}$ is the set of all decision problems $L \subseteq \Sigma^{*}$ that there exists a polynomially balanded and polynomial-time decidable relation $R_{L} \subseteq \Sigma^{*} \times \Sigma^{*}$ such that $L = \{ x \in \Sigma^{*} ~;~ (x, y) \in R_{L} \text{ for some } y \in \Sigma^{*} \}$.
The complexity class $\mathtt{coNP}$ is a complement of $\mathtt{NP}$. Thus, a decision problem $L \subseteq \Sigma^{*}$ belongs to $\mathtt{coNP}$ if the decision problem $\Sigma^{*} \setminus L$ is in $\mathtt{NP}$.

Let $L$ and $M$ be two decision problems. A polynomial-time reduction from $L$ to $M$ is specified by a polynomial-time computable function $f: \Sigma^{*} \to \Sigma^{*}$ satisfying that for every string $x \in \Sigma^{*}$, the string $x$ is in $L$ if and only if the string $f(x)$ is in $M$.

We say that a decision problem $L$ is $\mathtt{coNP}$-hard if every decision problem in the class $\mathtt{coNP}$ is reducible to $L$ in polynomial time. We say that a decision problem $L$ is $\mathtt{coNP}$-complete if $L$ is $\mathtt{coNP}$-hard and belongs to $\mathtt{coNP}$. 

\subsubsection{Search Problems}
Let $R \subseteq \Sigma^{*} \times \Sigma^{*}$ be a relation.
The search problem with respect to $R$ is defined as follows: Given a string $x \in \Sigma^{*}$, find a string $y \in \Sigma^{*}$ such that $(x, y) \in R$ if such a string exists or report ``{\it no}'' otherwise.
We call a search problem with respect to $R$ a total search problem if $R$ is total.

The complexity class $\mathtt{TFNP}$, introduced by Megiddo and Papadimitriou \cite{MP91}, is defined as the set of all total search problems that satisfies polynomially balanced and polynomial-time decidable.

Let $R$ and $S$ be two search problems.
A polynomial-time reduction from $R$ to $S$ is specified by two polynomial-time computable functions $f: \Sigma^{*} \to \Sigma^{*}$ and $g: \Sigma^{*} \times \Sigma^{*} \to \Sigma^{*}$ satisfying that for all $x, y \in \Sigma^{*}$, if $(f(x), y) \in S$, then $(x, g(x, y)) \in R$. Thus, the function $f$ maps an instance $x$ of $R$ to an instance $f(x)$ of $S$, and the function $g$ recovers from any solution $y$ of $S$ with respect to $f(x)$ to a solution $g(x, y)$ of $R$ with respect to $x$.

%\paragraph*{Polynomial Local Search (PLS)}
Johnson, Papadimitriou, and Yannakakis \cite{JPY88} introduced the complexity class $\mathtt{PLS}$, a subclass of $\mathtt{TFNP}$, to capture the complexity of finding a local optimum in which each improvement step can be computed in polynomial time.
To define the class $\mathtt{PLS}$, we introduce \textsc{Basic PLS Problem}, and then define the class $\mathtt{PLS}$ as the set of all search problems in $\mathtt{TFNP}$ that is polynomial-time reducible to \textsc{Basic PLS Problem}.

A local search problem $\mathcal{L}$ is formulated as the following search problem: For every instance $I \in \Sigma^{*}$, there exists a finite set $F_{I} \subseteq \Sigma^{*}$ of feasible solutions, an objective function $p_{I}: F_{I} \to \mathbb{Z}_{\ge 0}$, and for every feasible solution $s \in F_{I}$, there is a neighborhood $N_{I}(s) \subseteq F_{I}$. The task of this search problem is to seek a local optimum $s^{*} \in F_{I}$ such that $p_{I}(s^{*}) \ge p_{I}(s)$ for all $s \in N_{I}(s^{*})$.

\begin{definition}
	A local search problem $\mathcal{L}$ is called the problem \textsc{Basic PLS Problem} if there exists some polynomial $q$ such that $F_{I} \subseteq \Sigma^{q(|I|)}$ for all instance $I$, and if there exists three polynomial-time algorithms $\mathtt{A}$, $\mathtt{B}$, and $\mathtt{C}$ such that:
	\begin{enumerate}[label=(\alph*)]
		\item Given an instance $I$, the algorithm $\mathtt{A}$ outputs an initial feasible solution $s_{0} \in F_{I}$.
		\item Given an instance $I$ and a string $s \in \Sigma^{q(|I|)}$, the algorithm $\mathtt{B}$ checks whether $s$ is a feasible solution, i.e., $s \in F_{I}$, and, if so, calculates the objective value $p_{I}(s)$.
		\item Given an instance $I$ and any feasible solution $s \in F_{I}$, the algorithm $\mathtt{C}$ verifies whether $s$ is a local maximum, if not, returns a feasible solution $s' \in \mathcal{N}_{I}(s)$ such that $p_{I}(s') > p_{I}(s)$.
	\end{enumerate}
\end{definition}

\begin{definition}
	The complexity class $\mathtt{PLS}$ is the set of all search problems that can reduce to \textsc{Basic PLS Problem} in polynomial time.
\end{definition}

We say that a search problem $R$ is $\mathtt{PLS}$-hard if every search problem in the class $\mathtt{PLS}$ is reducible to $R$ in polynomial time.
Since a polynomial-time reduction is transitive, a problem that can be reduced from \textsc{Basic PLS Problem} in polynomial time is a $\mathtt{PLS}$-hard problem.
We say that a search problem $R$ is $\mathtt{PLS}$-complete if $R$ is $\mathtt{PLS}$-hard and belongs to $\mathtt{PLS}$. 

\section{Main Results}
\label{Section_Results}
\subsection{Non-Existence of Core-Stable Allocations}
This section shows that there is a graphical one-sided matching problem with exchange cost that has no $6$-stable allocation. To prove this, we present a reduction from a network coordination game to a graphical one-sided matching problem with exchange cost.

Recall that Apt, de Keijzer, Rhan, Sch\"{a}fer, and Simon \cite{AKRSS17} have proven that there is a network coordination game that does not have $3$-equilibria.
Hence, a reduction from a network coordination game $\mathcal{G}$ to a graphical one-sided matching problem with exchange cost $\mathcal{M}$ such that $\mathcal{G}$ has a $k$-stable allocation if and only if $\mathcal{M}$ has a $2k$-stable allocation implies that we have a graphical one-sided matching problem with exchange cost that has no $6$-stable allocation.

\begin{theorem}[Apt, de Keijzer, Rhan, Sch\"{a}fer, and Simon \cite{AKRSS17}]
\label{Theorem_Non-Existence_AKRSS17}
	There exists a network coordination game that does not have a $3$-equilibrium.
\end{theorem}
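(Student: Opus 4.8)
The plan is to prove the theorem by exhibiting one explicit network coordination game $\mathcal{G} = \langle G = (V,E), m, (S_i)_{i \in V}\rangle$ and verifying that \emph{every} strategy profile admits a profitable deviation by a coalition of size at most three; by definition this is exactly the nonexistence of a $3$-equilibrium. Following Apt et al., I would take $G$ to be a small graph equipped with restricted strategy sets $S_i \subsetneq [m]$ chosen so that no single color lies in all of the $S_i$, which forces a cyclic ``coordination conflict'': any locally coordinated configuration leaves some group of players wanting to re-coordinate on a different color. The whole argument then reduces to certifying, profile by profile, the existence of a small blocking coalition.

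The main tool is the exact potential $\Phi(s) = |\{\{i,j\}\in E : s_i = s_j\}| = \tfrac12\sum_{i\in V} p_i(s)$. For a single deviating player the potential changes by exactly that player's payoff change, so any profile that is \emph{not} a Nash ($1$-)equilibrium is already blocked by a coalition of size one. The remaining profiles are precisely the local maxima of $\Phi$, and I would organize them by their potential value. This cleanly isolates the difficulty: only the coordinated local maxima are dangerous.

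A structural observation should guide the design, namely that the \emph{global} maximizer of $\Phi$ is always a $2$-equilibrium, so pairs alone can never finish the job. Indeed, at a global maximum no single deviation helps; and for an adjacent pair $\{i,j\}$ a profitable joint deviation would have to make the edge $\{i,j\}$ newly monochromatic (contributing $+1$ to each of $p_i,p_j$) while the net number of their outside agreements changes by some integer summing to at most $-1$ (otherwise $\Phi$ would increase, contradicting maximality); an elementary integer count then shows the two outside changes cannot both be $\ge 0$, so they cannot both be strictly better off. Hence the proof \emph{must} destabilize the top configurations with coalitions of size exactly three. The intended mechanism is a ``triangle recoloring'': three mutually adjacent players, each currently holding one outside agreement and none among themselves, simultaneously adopt a common color, each trading one outside agreement for two inside ones and thereby strictly gaining while $\Phi$ is unchanged. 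One designs the gadget so that these triangle-recolorings chain into a cycle through the candidate equilibria with no sink.

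Concretely I would (i) enumerate the finitely many Nash profiles of $\mathcal{G}$ (feasible since the $S_i$ are small), (ii) attach to each an explicit blocking triple, and (iii) confirm by direct inspection that no profile at all is a $3$-equilibrium. The main obstacle is the construction itself rather than the bookkeeping. The potential analysis pins the danger on the coordinated local maxima, and the recurring trap is that when a group coordinates, its members tend to reach their maximum attainable payoff and then resist every further move, turning that state into a genuine $3$-equilibrium. Defeating this requires a gadget in which \emph{no} player ever sits at its global payoff maximum in any Nash profile, so that each coordinated state still offers some adjacent triple a strict common improvement; checking that this holds simultaneously for all profiles---equivalently, that the cycle of triangle-recolorings has no fixed point---is the crux, and is the step I expect to be hardest.
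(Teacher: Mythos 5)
The first thing to note is that the paper does not prove this statement at all: Theorem~\ref{Theorem_Non-Existence_AKRSS17} is imported verbatim from Apt, de Keijzer, Rahn, Sch\"{a}fer, and Simon \cite{AKRSS17}, so there is no internal proof to compare against; the benchmark is the original argument, which exhibits an explicit small instance and verifies, profile by profile, that some coalition of size at most three always blocks. Your scaffolding is correct as far as it goes. With the exact potential $\Phi(s) = |\{ \{i,j\} \in E : s_i = s_j \}|$, any profile that is not a Nash equilibrium is blocked by a singleton, and your claim that a global maximizer of $\Phi$ is a $2$-equilibrium is sound: for adjacent $i,j$ the double-counted edge gives $\Delta\Phi = \Delta p_i + \Delta p_j - \Delta e_{ij} \ge 2 - 1 > 0$ whenever both strictly gain, contradicting maximality (and the non-adjacent case is immediate). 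Indeed the same counting shows more: at a global maximizer, a profitable deviation by a triple is only possible if the three players are mutually adjacent and the move is exactly your $\Phi$-preserving triangle recoloring, so the tightness of the coalition size three and the shape of the required gadget are both correctly identified.

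But the proposal stops exactly where the theorem begins. The statement asserts the \emph{existence} of a concrete game, and you never exhibit one: no graph, no $m$, no strategy sets, and consequently no verification that every profile admits a blocking coalition of size at most three. Items (i)--(iii) of your plan are not supporting bookkeeping; they \emph{are} the proof, and you explicitly defer the crux --- designing an instance in which no Nash profile (no local maximum of $\Phi$) leaves every adjacent triple satisfied, i.e., in which the chain of triangle recolorings has no fixed point. The trap you yourself name is real and is not ruled out by anything in the proposal: once a coalition coordinates, its members typically sit at their maximum attainable payoff and thereafter block nothing, which is precisely how candidate constructions collapse into genuine $3$-equilibria. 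In \cite{AKRSS17} this is resolved by writing down an explicit game and checking the finitely many (equivalence classes of) Nash profiles directly. As it stands, you have a correct reduction of the theorem to a finite search plus a plausible design heuristic, but for an existence-of-counterexample statement the missing instance is the entire content, so the theorem remains unproved.
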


\begin{lemma}
\label{Lemma_Reduction_from_Coordination_Games_to_One-Sided_Market}
	Let $k$ be a positive integer.
	There exists a polynomial-time reduction from \textsc{Exist $k$-Equilibria} to \textsc{Exist $2k$-Stable Allocation}.
\end{lemma}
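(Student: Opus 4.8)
The plan is to build, from a network coordination game $\mathcal{G} = \langle G = (V,E), m, (S_i)_{i \in V} \rangle$, an instance $\mathcal{M}$ in which each player of $\mathcal{G}$ becomes a small self-contained gadget: a distinguished agent's held item encodes that player's current strategy, and a single-player strategy change is simulated by exchanging items between exactly \emph{two} agents. This ``one deviating player $\leftrightarrow$ one pairwise swap'' correspondence is exactly what turns $k$-equilibria into $2k$-stable allocations, and I would prove the biconditional $\mathcal{G} \in \textsc{Exist $k$-Equilibria} \iff \mathcal{M} \in \textsc{Exist $2k$-Stable Allocation}$.

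\textbf{Construction.} For each player $i \in V$ I introduce one real agent $a_i$ and $|S_i|-1$ dummy agents, and for each pair $(i,\ell)$ with $\ell \in S_i$ an item $x_{i,\ell}$, so that $|N| = |A| = \sum_i |S_i|$ as the model requires. I put an edge of the item graph $\mathcal{I}$ between $x_{i,\ell}$ and $x_{j,\ell'}$ exactly when $\ell = \ell'$ (same-strategy items form cliques); I let the agent graph $\mathcal{A}$ be a copy of $G$ on the real agents with the dummies isolated and uniform weight $w \equiv 2$; and I set $v$ to a large constant $L$ on each agent's own-player items and $0$ on foreign items. Then along any allocation $\pi$ in which every real agent $a_i$ holds an own item $x_{i,\ell_i}$ (call such $\pi$ \emph{legal}) one computes $u_{a_i}(\pi) = L + 2\,p_i(s)$ for the profile $s$ with $s_i = \ell_i$, so $u_{a_i}$ tracks the coordination payoff up to an additive constant. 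I define the cost $c_a(x,y)$ purely from how $x,y$ sit relative to $a$'s own player: own$\to$own costs $+1$ for a real agent and $-1$ for a dummy (a real/dummy pair is then permissible with total cost $0$, and the weight $w=2$ gives the real agent just enough slack to profit iff its coordination payoff strictly rises); own$\to$foreign costs a prohibitive $M$; foreign$\to$own costs $+C$ and foreign$\to$foreign costs $-C'$ with $C = C' > 2|V|$ and $L$ chosen larger still. The foreign$\to\,\cdot$ costs are never triggered from a legal allocation, so they leave the coordination simulation untouched and exist only to destabilize illegal allocations.

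\textbf{The legal correspondence.} For the easy direction I would take a profitable deviation $s \mapsto_K t$ with $|K| \le k$ and simulate it by swapping, for each $i \in K$, the agent $a_i$ with the dummy holding $x_{i,t_i}$: this is a pairwise exchange on $2|K| \le 2k$ agents that is permissible (each pair sums to cost $0$) and blocking (each real agent strictly gains since $p_i(t) > p_i(s)$, and each dummy gains the $+1$ reward from its $-1$ cost), so any legal allocation arising from a non-equilibrium profile is not $2k$-stable. Conversely, given a permissible blocking exchange $(X,\mu)$ at a legal $\pi$, the prohibitive own$\to$foreign cost $M$ forces every agent to keep an own-player item, so $\mu$ preserves the gadgets; each moving real agent lies in a cycle containing at least one dummy of its own player, so writing $r,d$ for the numbers of real and dummy agents in $X$ we get $d \ge r$ and hence $r \le |X|/2 \le k$. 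Reading off the new items of these $r$ real agents yields a profitable deviation on $\le k$ players. Thus a legal allocation is $2k$-stable iff its profile is a $k$-equilibrium.

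\textbf{The hard part.} The main obstacle is the reverse direction for \emph{illegal} allocations, where some real agent holds a foreign item and no profile is directly readable; I must show every such allocation already admits a blocking exchange of size at most $2k$, in fact of size $2$. Here I would argue by counting: if a player-$i$ agent holds a foreign item then, since player $i$ has equally many items and agents, some player-$i$ item is held by a non-player-$i$ agent $b$, and swapping these two agents moves the first onto an own item (a gain of essentially $L$) while the foreign$\to$own and foreign$\to$foreign costs $+C,-C'$ are tuned so that $b$ also strictly profits and the pair stays permissible. The delicate point, and where the negative exchange costs genuinely do work that a nonnegative valuation cannot arrange for an otherwise indifferent partner, is calibrating $L,M,C,C'$ against the maximum coordination utility $2|V|$ so that (i) abandoning a gadget is never profitable, (ii) repairing a misplacement is always a permissible two-agent block regardless of how $b$'s coordination edges shift, and (iii) inside a legal allocation the only permissible improving moves are the real/dummy strategy swaps. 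Granting (i)--(iii), every $2k$-stable allocation is legal and hence encodes a $k$-equilibrium; the instance map is polynomial since $|N| = |A| \le m|V|$ and all graphs and cost tables are of polynomial size, and for the search version the recovery map simply reads the real agents' items.
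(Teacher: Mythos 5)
Your reduction is correct and is essentially the paper's own construction: the same per-player gadget (one real agent plus $|S_i|-1$ dummies, one item per strategy), same-label item cliques as the item graph, the agent graph $G$ with uniform weight $2$ and isolated dummies, a large valuation on own items, and the $+1/-1$ own-to-own costs that make real/dummy swaps the only blocking moves from a valid allocation, so that profitable deviations of $\le k$ players correspond exactly to blocking pairwise exchanges of size $\le 2k$. Your minor departures are immaterial: you deter gadget-escape with a large \emph{positive} cost (so the escaping agent cannot strictly improve) where the paper uses the large negative cost $-|V|\cdot\Delta$ (so such an exchange is not even permissible), and you bound the number of moving real agents by the counting argument $d \ge r$ where the paper's Claim 2 pins down exactly one dummy per gadget via the cost sums; both variants yield the same equivalence.
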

\begin{proof}
	Here, we present a proof sketch; the formal proof can be found in Appendix \ref{Appendix__Lemma_Reduction_from_Coordination_Games_to_One-Sided_Market}.
	
	Let $\mathcal{G} := \langle G = (V, E), m, (S_{i})_{i \in V} \rangle$ be an instance of the problem \textsc{Exist $k$-Equilibria}.
	We now construct an instance $\mathcal{M} = \langle \mathcal{A} = (N, \tau, \{ w: \tau \to \mathbb{Z}_{> 0} \}), \mathcal{I} = (A, \lambda), \{ v_{i}: A \to \mathbb{Z}_{\ge 0} \}_{i \in N}, \{ c_{i}: A \times A \to \mathbb{Z} \}_{i \in N } \rangle$ of the problem \textsc{Exist $2k$-Stable Allocation} such that there is a $k$-equilibrium in $\mathcal{G}$ if and only if there is a $2k$-stable allocation for $\mathcal{M}$.
	
	First, we construct the item graph $\mathcal{I} = (A, \lambda)$. For each player $i \in V$, we define $A_{i} := \{ a_{i, s} ~;~ s \in S_{i} \}$. The set of items is defined as $A = \bigcup_{i \in V} A_{i}$. For each item $a \in A$, there is a pair of a player $i \in V$ and a strategy $s \in S_{i}$ such that $a = a_{i, s}$. We define the label function $f: A \to [m]$ by $f(a_{i, s}) = s \in S_{i}$ for each item $a_{i, s}$ in $A$.
	We set $\lambda$ to be that for each pair of two distinct items $a$ and $b$ in $A$, $\{ a, b \} \in \lambda$ if and only if $f(a) = f(b)$.
	
	Next, we construct the agent graph $\mathcal{A} = ( N, \tau, \{ w: \tau \to \mathbb{Z}_{> 0} \} )$. For each player $i \in V$, we define a partial agent set $N_{i}$ as $\{ x_{i} \} \cup \{ d_{i}^{1}, d_{i}^{2}, \dots, d_{i}^{|S_{i}| - 1} \}$. The agent set is defined as $N = \bigcup_{i \in V} N_{i}$. For every player $i$ in $V$, we call an agent in $\{ d_{i}^{1}, d_{i}^{2}, \dots, d_{i}^{|S_{i}| - 1} \}$ a dummy agent. For each player $i$ in $V$, we simply write $\{ d_{i}^{1}, d_{i}^{2}, \dots, d_{i}^{|S_{i}| - 1} \}$ by $D_{i}$. The set $\tau$ of edges is defined as $E$. Thus, every dummy agent is isolated. For each edge $e \in \tau$, we define the edge weight as $w(e) = 2$.
	
	Finally, for each agent $i$ in $N$, we define the valuation function $v_{i}: A \to \mathbb{Z}_{\ge 0}$ and the cost function $c_{i}: A \times A \to \mathbb{Z}$ to satisfy the following three properties.
	\begin{enumerate}[label=(\arabic*)]
		\item For any positive integer $k' \ge 2$, every $k'$-stable allocation $\pi: N \to A$ is valid. Here, an allocation $\pi: N \to A$ is {\it valid} if for every agent $\xi_{i}$ in $N_{i}$, it holds that $\pi(\xi_{i}) \in A_{i}$.
		\item Let $\pi: N \to A$ be a valid allocation. Every permissible coalitional exchange $(X, \mu)$ under $\pi$ is a pairwise exchange. In particular, for each permissible coalitional exchange $(X, \mu)$ under $\pi$, there exists a coalition $K$ of players in $V$ such that for each player $i \in K$, we have two agents $x_{i}$ and $d_{i} \in D_{i}$ satisfying that $(X, \mu) = \{ x_{i} \leftrightarrow d_{i} \}_{i \in K}$.
		\item Let $\pi: N \to A$ be a valid allocation. We define the strategy profile $s_{\pi} \in \mathbb{S}_{V}$ as $s_{\pi}(i) = f(\pi(x_{i}))$ for each player $i \in V$. Then, for each player $i \in V$, it satisfies that $u_{x_{i}}(\pi) = \Delta + 2 \cdot p_{i}(s_{\pi})$, where the constant $\Delta$ is defined to be sufficiently large.
	\end{enumerate}
	
	What remains is to prove that the network coordination game $\mathcal{G}$ has a $k$-equilibrium if and only if the graphical one-sided matching problem with exchange cost $\mathcal{M}$ has a $2k$-stable allocation.
	We describe the correspondence between a strategy profile and an allocation. For an allocation $\pi: N \to A$, we consider the strategy profile $s_{\pi} = (s_{\pi}(i))_{i \in V} \in \mathbb{S}_{V}$ such that $s_{\pi}(i) = f(\pi(x_{i}))$ for each player $i \in V$. If $\pi$ is $2k$-stable, then $s_{\pi}$ is a $k$-equilibrium.
	On the other hand, for a strategy profile $s = (s_{i})_{i \in V} \in \mathbb{S}_{V}$, we consider the allocation $\pi_{s}: N \to A$ such that for each player $i \in V$, $\pi_{s}(x_{i}) = a_{i, s_{i}}$, and each dummy agent $d_{i}^{k}$ in $D_{i}$ is assigned relevant item in $A_{i}$.
	To determine a relevant item, we induce a linear order over the rest items. According to that order, we assign an item to a dummy agent. See the formal proof for details.
	 If $s$ is a $k$-equilibrium, then $\pi_{s}$ is $2k$-stable.
\end{proof}

\begin{theorem}
\label{Theorem_Non-Existence_Six-Stable}
	There exists a graphical one-sided matching problem with exchange cost that does not have $6$-stable allocations.
\end{theorem}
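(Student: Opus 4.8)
The plan is to obtain the desired instance by composing the two preceding results rather than by building anything from scratch. By Theorem~\ref{Theorem_Non-Existence_AKRSS17}, there is a network coordination game $\mathcal{G} = \langle G = (V,E), m, (S_i)_{i \in V} \rangle$ that admits no $3$-equilibrium. I would feed this particular $\mathcal{G}$ into the reduction of Lemma~\ref{Lemma_Reduction_from_Coordination_Games_to_One-Sided_Market} instantiated at $k = 3$, producing a graphical one-sided matching problem with exchange cost $\mathcal{M} = \langle \mathcal{A}, \mathcal{I}, \{v_i\}_{i \in N}, \{c_i\}_{i \in N} \rangle$ for which $\mathcal{G}$ has a $3$-equilibrium if and only if $\mathcal{M}$ has a $6$-stable allocation.

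First I would record the precise correspondence supplied by the reduction: to any allocation $\pi \colon N \to A$ it associates the strategy profile $s_\pi$ given by $s_\pi(i) = f(\pi(x_i))$ for each $i \in V$, and the reduction guarantees that whenever $\pi$ is $6$-stable the profile $s_\pi$ is a $3$-equilibrium of $\mathcal{G}$. This single implication is all I actually need. Arguing by contradiction, suppose $\mathcal{M}$ possessed a $6$-stable allocation $\pi$. Then $s_\pi$ would be a $3$-equilibrium of $\mathcal{G}$, contradicting the choice of $\mathcal{G}$ from Theorem~\ref{Theorem_Non-Existence_AKRSS17}. Hence $\mathcal{M}$ has no $6$-stable allocation, which is exactly the assertion of the theorem.

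For completeness I would also note that $\mathcal{M}$ is a legitimate instance of our model: the construction in Lemma~\ref{Lemma_Reduction_from_Coordination_Games_to_One-Sided_Market} outputs a valid item graph $\mathcal{I}$, a valid agent graph $\mathcal{A}$ with positive integer weights, non-negative integer valuations, and an integer-valued cost function (which our model permits to be negative), all of size polynomial in $|\mathcal{G}|$. The genuine difficulty of the result is not located in this theorem at all: all of the work lives inside the reduction, where the valuations and cost functions must be engineered so that the three structural properties hold simultaneously, namely that every $k'$-stable allocation is valid, that every permissible coalitional exchange under a valid allocation decomposes into pairwise swaps of the form $\{x_i \leftrightarrow d_i\}_{i \in K}$, and that the payoff identity $u_{x_i}(\pi) = \Delta + 2\, p_i(s_\pi)$ is satisfied. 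Granting that lemma, the present theorem is merely its specialization at $k = 3$, so the only step requiring care is the bookkeeping observation that the stability-implies-equilibrium direction of the reduction is precisely the one that transfers the non-existence statement; there is no substantive obstacle beyond this.
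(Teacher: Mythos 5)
Your proposal is correct and takes essentially the same route as the paper, which proves the theorem by directly combining Theorem~\ref{Theorem_Non-Existence_AKRSS17} with Lemma~\ref{Lemma_Reduction_from_Coordination_Games_to_One-Sided_Market} instantiated at $k=3$. Your only addition is to spell out the (correct) bookkeeping that the stability-implies-equilibrium direction of the reduction is the one needed to transfer non-existence, which the paper leaves implicit in its one-line proof.
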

\begin{proof}
	It immediately follows from Theorem \ref{Theorem_Non-Existence_AKRSS17} and Lemma \ref{Lemma_Reduction_from_Coordination_Games_to_One-Sided_Market}.
\end{proof}

Next, we consider the complexity of checking whether an allocation for a graphical one-sided matching problem with exchange cost is a $k$-stable allocation for some positive integer $k \ge 2$.
The following theorem states that such a problem is $\mathtt{coNP}$-complete even if every player's cost function is trivial.

\begin{theorem}
\label{Theorem_Check_Stable_without_Costs_coNP}
	The problem \textsc{Check Stable Allocation without Exchange Cost} is $\mathtt{coNP}$-complete.
\end{theorem}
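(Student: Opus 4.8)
The plan is to establish the two directions of $\mathtt{coNP}$-completeness separately. For membership, note that since the cost functions are trivial, for every coalitional exchange $(X,\mu)$ the permissibility condition $\sum_{x\in X}c_x(\cdot,\cdot)=0\ge 0$ holds automatically, and the blocking condition collapses to $u_x(\pi^{(X,\mu)})>u_x(\pi)$ for every $x\in X$. Hence $\pi$ fails to be $k$-stable exactly when there is a witness $(X,\mu)$ with $|X|\le k$, $\mu$ a fixed-point-free bijection of $X$, and strict improvement for every member. Such a witness has size $O(|N|\log|N|)$, and given it one can compute $\pi^{(X,\mu)}$, evaluate each $u_x$ by summing $v_x$ with the weights of the item-adjacent incident edges, and check all the strict inequalities in polynomial time. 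Thus the complementary problem ``$\pi$ is not $k$-stable'' lies in $\mathtt{NP}$, so \textsc{Check Stable Allocation without Exchange Cost} is in $\mathtt{coNP}$.

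For hardness I would show that this complementary problem — deciding whether a blocking exchange exists — is $\mathtt{NP}$-hard, by reducing from \textsc{3-SAT} (any $\mathtt{NP}$-complete problem would serve). Given a formula $\phi$, the goal is a no-cost instance $\mathcal{M}$ and an allocation $\pi$ with $k=|N|$ such that a blocking exchange exists iff $\phi$ is satisfiable, so that $\pi$ is stable iff $\phi$ is unsatisfiable. Because the utilities here are precisely the coordination-style payoffs of the model, I would reuse the encoding philosophy of Lemma~\ref{Lemma_Reduction_from_Coordination_Games_to_One-Sided_Market}: items carry labels, agents come in gadget groups, and a pairwise swap between an agent and a designated partner toggles the label that agent holds. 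A variable gadget for $x_i$ would offer one such swap whose two outcomes encode the two truth values; a clause gadget for $c_j$ would contain agents able to increase their coordination payoff only when at least one of its literals is set true by the coalition, so that the clause agents strictly gain exactly when the clause is satisfied. Coordination edges linking occurrences of a literal enforce a consistent assignment, and I would use a large uniform edge weight together with a valuation baseline (playing the role of the constant $\Delta$ in properties (1)--(3)) to rigidify the instance. Under these constraints a blocking exchange must select, for each variable, exactly one of its two swaps and make every clause gadget improve — i.e.\ it encodes a satisfying assignment — while conversely any satisfying assignment yields such an exchange.

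The main obstacle is gadget correctness: proving that the blocking exchanges are \emph{exactly} the encodings of satisfying assignments. Two difficulties must be controlled at once. First, spurious coalitions: since the coordination bonus is additive, nonnegative, and accrues only along agent-graph edges, the reward structure can \emph{encourage} agreement but never penalize disagreement, so the valuations together with the rigidity forced by the large baseline must rule out long mixing cycles and partial deviations, guaranteeing that every profitable exchange decomposes into the designated variable swaps with clause agents piggy-backing on them. Second, consistency: the edges linking literal occurrences must ensure that a literal and its negation cannot both be activated within a single profitable exchange. Once these points are settled, the reduction is polynomial-time, so \textsc{Unsat} reduces to \textsc{Check Stable Allocation without Exchange Cost}; combined with membership, this yields $\mathtt{coNP}$-completeness.
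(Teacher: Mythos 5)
Your membership argument is correct and matches the paper's (the blocking exchange is a polynomial-size, polynomial-time-verifiable witness for non-stability). The hardness direction, however, has a genuine gap: it is a plan whose entire technical content is deferred. You reduce from \textsc{3-SAT} via variable and clause gadgets, but you yourself identify the two make-or-break issues --- ruling out spurious blocking coalitions and enforcing consistency between a literal and its negation --- and then write ``once these points are settled, the reduction is polynomial-time.'' Those points are the proof. Worse, one of your structural claims is not enforceable as stated: you require that a blocking exchange ``make every clause gadget improve,'' but the blocking condition only constrains agents \emph{inside} the coalition $X$; nothing compels a blocking coalition to include any clause agent at all, and any clause agent that does join must itself change items (since $\mu(x)\neq x$ and $\pi$ is a bijection) \emph{and} strictly improve, which is a much stronger demand than ``piggy-backing.'' So clause satisfaction cannot be checked by requiring clause agents to gain; it must instead be funneled into the improvement condition of the variable-swap agents themselves, whose utility is a purely local, additive, nonnegative coordination sum over agent-graph neighbors. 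Designing gadgets that make such an agent improve \emph{exactly} when a globally consistent satisfying assignment is selected is precisely the hard combinatorial core, and it is what you leave open. In effect you are proposing to re-prove, inside the matching model and from scratch, the $\mathtt{coNP}$-hardness of recognizing coalitional equilibria in coordination games.

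The paper avoids all of this by reducing from \textsc{Check Equilibrium} for network coordination games, which Apt, de Keijzer, Rahn, Sch\"{a}fer, and Simon \cite{AKRSS17} already proved $\mathtt{coNP}$-complete: it reuses the construction of Lemma \ref{Lemma_Reduction_from_Coordination_Games_to_One-Sided_Market} verbatim for the graphs, modifies only the dummy agents' valuations (each dummy $d_i$ values $\Delta$ exactly the item $a_{i,s_i}$ held by $x_i$ under the given profile $s$, and $0$ otherwise, so that dummies in a coalition do strictly improve when swapping with $x_i$), and maps the given profile $s$ to an explicit allocation $\pi$, yielding that $s$ is a $k$-equilibrium iff $\pi$ is $2k$-stable. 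If you want to salvage your route, the honest fix is to either (i) import the AKRSS17 gadgetry explicitly and verify it survives the translation into items/allocations (at which point you have essentially rediscovered the paper's proof), or (ii) reduce from \textsc{Check Equilibrium} directly as the paper does, which costs one citation and eliminates both of your acknowledged obstacles.
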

\begin{proof}
	Here, we present a proof sketch; the formal proof can be found in Appendix \ref{Appendix__Theorem_Check_Stable_without_Costs_coNP}.
	
	It is easy to see that the problem \textsc{Check Stable Allocation without Exchange Cost} belongs to $\mathtt{coNP}$.
	Therefore, it is sufficient to prove the $\mathtt{coNP}$-hardness of the problem \textsc{Check Stable Allocation without Exchange Cost}.
	For proving the $\mathtt{coNP}$-hardness, we show a polynomial-time reduction from \textsc{Check Equilibrium}, which is a $\mathtt{coNP}$-complete problem \cite{AKRSS17}, to \textsc{Check Stable Allocation without Exchange Cost}.
	
	Let $\left\langle \mathcal{G} = \left( G = (V, E), m, (S_{i})_{i \in V} \right), s = (s_{i})_{i \in V}, k \right\rangle$ be an instance of \textsc{Check Equilibrium}.
	We now construct a graphical one-sided matching problem without exchange costs $\mathcal{M} = \left\langle \mathcal{A} = (N, \tau, \{ w: \tau \to \mathbb{Z}_{> 0} \}), \right. \allowbreak \left. \mathcal{I} = (A, \lambda), \{ v_{i}: A \to \mathbb{Z}_{\ge 0} \}_{i \in V} \right\rangle$ and an allocation $\pi: N \to A$.
	The construction of the agent graph $\mathcal{A}$ and the item graphs $\mathcal{I}$ is the same as the graph shown in the proof of Lemma \ref{Lemma_Reduction_from_Coordination_Games_to_One-Sided_Market}.
	Due to the different computational problems --- specifically, without exchange costs --- we modify the valuation function $v_{d}: A \to \mathbb{Z}_{\ge 0}$ for each dummy agent $d \in \bigcup_{i \in V} D_{i}$.
	For each player $i \in V$ and for each dummy agent $d_{i} \in D_{i}$, we define the new valuation function $v_{d_{i}}: A \to \mathbb{Z}_{\ge 0}$ as follows: $v_{d_{i}}(a_{i, s_{i}}) = \Delta$ and $v_{d_{i}}(a) = 0$ for every item $a \in A \setminus \{ a_{i, s_{i}} \}$.
	Finally, we define the allocation $\pi: N \to A$ corresponding to the strategy profile $s = (s_{i})_{i \in V}$ as follows.
	For each player $i \in V$, $\pi(x_{i}) = a_{i, s_{i}}$. For every dummy agent, we assign a relevant item using the same approach as in the proof of Lemma \ref{Lemma_Reduction_from_Coordination_Games_to_One-Sided_Market}.
	Then, it holds that the strategy profile $s$ is a $k$-equilibrum of $\mathcal{G}$ if and only if the allocation $\pi: N \to A$ is a $2k$-stable allocation of $\mathcal{M}$.
\end{proof}

We obtain the $\mathtt{coNP}$-completeness of the problem \textsc{Check Stable Allocation} as a corollary of Theorem \ref{Theorem_Check_Stable_without_Costs_coNP} since there is a trivial polynomial-time reduction from \textsc{Check Stable Allocation without Exchange Cost} to \textsc{Check Stable Allocation}.

\begin{corollary}
	The problem \textsc{Check Stable Allocation} is $\mathtt{coNP}$-complete.
\end{corollary}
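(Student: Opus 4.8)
The plan is to establish the two halves of completeness separately, inheriting hardness directly from Theorem~\ref{Theorem_Check_Stable_without_Costs_coNP} and verifying membership by a short certificate argument.

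First I would verify membership in $\mathtt{coNP}$. Since a problem lies in $\mathtt{coNP}$ precisely when its complement lies in $\mathtt{NP}$, it suffices to exhibit a polynomially balanced, polynomial-time verifiable certificate witnessing that a given allocation $\pi$ is \emph{not} $k$-stable. Such a certificate is a blocking permissible coalitional exchange $(X, \mu)$ with $|X| \le k$. Given the instance $\mathcal{M}$, the allocation $\pi$, the bound $k$, and a candidate $(X, \mu)$, one checks in polynomial time that $\mu$ is a fixed-point-free bijection on $X$, that $(X, \mu)$ is permissible by evaluating $\sum_{x \in X} c_{x}(\pi(x), \pi(\mu(x)))$ and testing $\ge 0$, and that every agent $x \in X$ strictly improves, i.e. $u_{x}(\pi^{(X, \mu)}) - c_{x}(\pi(x), \pi^{(X, \mu)}(x)) > u_{x}(\pi)$. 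Because $|X| \le k \le |N|$, the encoding of $(X, \mu)$ is polynomial in the input size, and all the utility and cost evaluations run in polynomial time. Hence the complement is in $\mathtt{NP}$ and the problem is in $\mathtt{coNP}$.

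For $\mathtt{coNP}$-hardness I would give the trivial reduction from \textsc{Check Stable Allocation without Exchange Cost}, which is $\mathtt{coNP}$-complete by Theorem~\ref{Theorem_Check_Stable_without_Costs_coNP}. Given an instance $\langle \mathcal{M}, \pi, k \rangle$ of the cost-free problem, map it to the instance $\langle \mathcal{M}', \pi, k \rangle$ of \textsc{Check Stable Allocation}, where $\mathcal{M}'$ augments $\mathcal{M}$ with the trivial cost function $c_{i}(a, b) = 0$ for every agent $i$ and all items $a, b$, leaving the agent graph, item graph, and valuations unchanged; this map $f$ is clearly polynomial-time computable. Correctness rests on confirming that, under trivial costs, $k$-stability collapses exactly to the cost-free notion: every coalitional exchange is permissible because $\sum_{x \in X} c_{x}(\cdot, \cdot) = 0 \ge 0$, and the blocking condition $u_{x}(\pi^{(X, \mu)}) - c_{x}(\pi(x), \pi^{(X, \mu)}(x)) > u_{x}(\pi)$ reduces to the plain utility improvement $u_{x}(\pi^{(X, \mu)}) > u_{x}(\pi)$. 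Therefore $\pi$ is $k$-stable for $\mathcal{M}$ in the cost-free sense if and only if $\pi$ is $k$-stable for $\mathcal{M}'$, so the reduction is answer-preserving, and combining this with membership yields $\mathtt{coNP}$-completeness.

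I do not expect a genuine obstacle, since the statement is a corollary; the only point requiring care is the bookkeeping in the hardness step, namely confirming that introducing identically-zero cost functions neither creates new permissible exchanges nor changes which exchanges are blocking, so that the cost-free special case embeds faithfully into the general problem.
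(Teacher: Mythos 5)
Your proposal is correct and matches the paper's argument: the paper derives the corollary from Theorem~\ref{Theorem_Check_Stable_without_Costs_coNP} via exactly the trivial reduction you describe (embedding the cost-free instance by adding identically-zero cost functions), with $\mathtt{coNP}$ membership handled by the same short certificate argument. You merely spell out the routine details that the paper leaves implicit.
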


\subsection{Existence of Two-Stable Allocations}
This section discusses the existence and the complexity of a $2$-stable allocation for a graphical one-sided matching problem with exchange cost.
We show that every such problem has at least one $2$-stable allocation.
Our existence proof closely follows from the corresponding proof of Massand and Simon \cite{MS19}.
Furthermore, we observe that the problem of finding a $2$-stable allocation is in $\mathtt{PLS}$.

The proofs of Theorem \ref{Theorem_Existence_Two-Stable} and Lemma \ref{Lemma_Find_Two_Stable_in_PLS} can be found in Appendices \ref{Appendix__Theorem_Existence_Two-Stable} and \ref{Appendix__Lemma_Find_Two_Stable_in_PLS}, respectively.

\begin{theorem}
\label{Theorem_Existence_Two-Stable}
	Every graphical one-sided matching problem with exchange costs always has at least one $2$-stable allocation.
\end{theorem}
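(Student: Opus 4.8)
The plan is to exhibit a global potential function on the finite set $\Pi$ of allocations whose value strictly increases under every blocking permissible pair; a maximizer of this potential is then automatically $2$-stable. Concretely, I would set
$$\Phi(\pi) = \sum_{i \in N} v_{i}(\pi(i)) + \sum_{\{i,j\} \in \tau \,:\, \{\pi(i),\pi(j)\} \in \lambda} w(\{i,j\}),$$
that is, the total valuation plus the weight of those agent-graph edges that are active under $\pi$, each counted once. Since $\Pi$ is finite, $\Phi$ attains a maximum, so the entire argument reduces to showing that whenever $\pi$ admits a blocking permissible pair its potential can be strictly increased. Note that for $k=2$ a coalitional exchange necessarily has $|X|=2$ (no fixed-point-free bijection exists on a singleton), so $2$-stability is precisely the absence of a blocking permissible pair $i \leftrightarrow j$.

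The combinatorial heart is the identity, valid for \emph{any} pair $i \leftrightarrow j$,
$$\Phi(\pi^{i\leftrightarrow j}) - \Phi(\pi) = \bigl(u_{i}(\pi^{i\leftrightarrow j}) + u_{j}(\pi^{i\leftrightarrow j})\bigr) - \bigl(u_{i}(\pi) + u_{j}(\pi)\bigr).$$
To prove it I would observe that swapping the items of $i$ and $j$ alters only the terms $v_{i}, v_{j}$ of the valuation sum, and changes the activity of only those edges incident to $i$ or to $j$; the edge $\{i,j\}$ itself, if present, is unaffected because its endpoint items $\{\pi(i),\pi(j)\}$ are merely interchanged. Writing $W_{i}(\pi)$ for the active-edge contribution to $u_{i}(\pi)$, every affected edge is incident to exactly one of $i, j$, hence enters exactly one of $W_{i}, W_{j}$; thus the change in the edge part of $\Phi$ equals the change in $W_{i}+W_{j}$, and adding the valuation change yields the displayed identity.

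For the blocking and permissibility step, write $c_{i} = c_{i}(\pi(i),\pi(j))$ and $c_{j} = c_{j}(\pi(j),\pi(i))$. By the definition of a blocking permissible pair we have $u_{i}(\pi^{i\leftrightarrow j}) - c_{i} > u_{i}(\pi)$ and $u_{j}(\pi^{i\leftrightarrow j}) - c_{j} > u_{j}(\pi)$, while permissibility gives $c_{i} + c_{j} \ge 0$. Adding the two strict inequalities and invoking permissibility,
$$u_{i}(\pi^{i\leftrightarrow j}) + u_{j}(\pi^{i\leftrightarrow j}) > u_{i}(\pi) + u_{j}(\pi) + (c_{i} + c_{j}) \ge u_{i}(\pi) + u_{j}(\pi),$$
so combined with the identity above this gives $\Phi(\pi^{i\leftrightarrow j}) > \Phi(\pi)$. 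This is exactly where the exchange costs are handled: the permissibility constraint $c_{i}+c_{j}\ge 0$ is precisely what converts the cost-adjusted improvements of the two agents into a strict increase of the cost-free potential, so the Massand--Simon potential survives the introduction of (possibly negative) costs unchanged.

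Finally, let $\pi^{\star} \in \arg\max_{\pi \in \Pi} \Phi(\pi)$, which exists because $\Pi$ is finite. If $\pi^{\star}$ admitted a blocking permissible pair $i \leftrightarrow j$, the previous step would give $\Phi((\pi^{\star})^{i\leftrightarrow j}) > \Phi(\pi^{\star})$, contradicting maximality; hence $\pi^{\star}$ has no blocking permissible pair and is $2$-stable. I expect the main obstacle to be the bookkeeping behind the identity $\Phi(\pi^{i\leftrightarrow j}) - \Phi(\pi) = \Delta(u_{i}+u_{j})$ --- in particular verifying that the edge $\{i,j\}$ contributes zero and that every other affected edge is counted exactly once --- whereas the role of the possibly negative costs is dispatched cleanly by the single permissibility inequality.
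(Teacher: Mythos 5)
Your proposal is correct and takes essentially the same route as the paper: your $\Phi$ is exactly one half of the paper's potential $\phi(\pi)=\sum_{i\in N}\{v_{i}(\pi(i))+u_{i}(\pi)\}$, and both proofs rest on the same two ingredients --- the swap identity $\Phi(\pi^{i\leftrightarrow j})-\Phi(\pi)=\bigl(u_{i}(\pi^{i\leftrightarrow j})+u_{j}(\pi^{i\leftrightarrow j})\bigr)-\bigl(u_{i}(\pi)+u_{j}(\pi)\bigr)$, using that the activity of the edge $\{i,j\}$ is invariant under the swap, and the permissibility bound $c_{i}+c_{j}\ge 0$ that absorbs the (possibly negative) costs --- followed by taking a maximizer over the finite set $\Pi$. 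The only cosmetic difference is that you verify the identity directly for pairs, whereas the paper proves it for arbitrary coalitional exchanges (Lemma \ref{Lemma_KeyProperty_of_Potential}) and then specializes, noting the internal-edge terms cancel for a pair.
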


\begin{lemma}
\label{Lemma_Find_Two_Stable_in_PLS}
	The problem \textsc{Find $2$-Stable Allocation} belongs to $\mathtt{PLS}$.
\end{lemma}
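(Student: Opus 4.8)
The plan is to exhibit \textsc{Find $2$-Stable Allocation} directly as a \textsc{Basic PLS Problem}. For an instance $\mathcal{M}$ I take the feasible set $F_{\mathcal{M}} = \Pi$ to be all allocations (bijections $\pi: N \to A$), the neighborhood $N_{\mathcal{M}}(\pi) = \{ \pi^{i \leftrightarrow j} : i, j \in N, \ i \neq j \}$ to consist of all single pairwise swaps, and the objective
\[
p_{\mathcal{M}}(\pi) = \sum_{i \in N} v_{i}(\pi(i)) + \sum_{\{a,b\} \in \tau \,:\, \{\pi(a), \pi(b)\} \in \lambda} w(\{a,b\}).
\]
The crucial point is that each satisfied edge is counted \emph{once} here; this is \emph{not} the social welfare $\sum_{i} u_{i}(\pi)$, in which every edge contributes twice, and the distinction will turn out to be essential. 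Since each $v_{i}$ maps into $\mathbb{Z}_{\ge 0}$ and $w$ into $\mathbb{Z}_{> 0}$, the objective $p_{\mathcal{M}}$ is a non-negative integer, as the definition of \textsc{Basic PLS Problem} requires.

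The three required algorithms are immediate. Algorithm $\mathtt{A}$ outputs any fixed bijection $\pi_{0} : N \to A$. Algorithm $\mathtt{B}$ checks in polynomial time whether a string encodes a bijection $N \to A$ and, if so, evaluates $p_{\mathcal{M}}$ by scanning the items and the edges of $\tau$. Algorithm $\mathtt{C}$, given $\pi$, iterates over all $\binom{|N|}{2}$ swaps $\pi^{i \leftrightarrow j}$, computes each $p_{\mathcal{M}}(\pi^{i \leftrightarrow j})$, and returns an improving neighbor if one exists or else certifies local maximality; this runs in polynomial time. Hence the local search problem is a \textsc{Basic PLS Problem}, and it remains only to argue that a local maximum of $p_{\mathcal{M}}$ solves \textsc{Find $2$-Stable Allocation}, so that the recovery map $g$ may return it unchanged.

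The heart of the argument is the identity
\[
p_{\mathcal{M}}(\pi^{i \leftrightarrow j}) - p_{\mathcal{M}}(\pi) = \bigl( u_{i}(\pi^{i \leftrightarrow j}) + u_{j}(\pi^{i \leftrightarrow j}) \bigr) - \bigl( u_{i}(\pi) + u_{j}(\pi) \bigr),
\]
which I would verify by observing that a swap changes only the items held by $i$ and $j$, so the only affected terms of $p_{\mathcal{M}}$ are the valuations of $i$ and $j$ together with the weights of edges incident to $i$ or $j$; moreover the edge $\{i,j\}$ (if present) is unaffected, since $\{\pi^{i \leftrightarrow j}(i), \pi^{i \leftrightarrow j}(j)\} = \{\pi(j), \pi(i)\} = \{\pi(i), \pi(j)\}$. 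Counting each such edge once matches exactly the combined change in $u_{i} + u_{j}$. Now suppose $\pi$ is not $2$-stable, witnessed by a permissible blocking pair $i \leftrightarrow j$. Permissibility gives $c_{i}(\pi(i), \pi(j)) + c_{j}(\pi(j), \pi(i)) \ge 0$, while the blocking condition gives $u_{i}(\pi^{i \leftrightarrow j}) - c_{i}(\pi(i), \pi(j)) > u_{i}(\pi)$ and $u_{j}(\pi^{i \leftrightarrow j}) - c_{j}(\pi(j), \pi(i)) > u_{j}(\pi)$; summing these and invoking permissibility yields $\bigl( u_{i}(\pi^{i \leftrightarrow j}) + u_{j}(\pi^{i \leftrightarrow j}) \bigr) - \bigl( u_{i}(\pi) + u_{j}(\pi) \bigr) > 0$, so by the identity $p_{\mathcal{M}}(\pi^{i \leftrightarrow j}) > p_{\mathcal{M}}(\pi)$ and $\pi$ is not a local maximum. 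Contrapositively, every local maximum of $p_{\mathcal{M}}$ is a $2$-stable allocation.

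I expect the main obstacle to be selecting the correct potential and dealing with the cost functions, which do not appear in $p_{\mathcal{M}}$ at all. The resolution is twofold: permissibility ($c_{i} + c_{j} \ge 0$) precisely absorbs the costs that the blocking condition subtracts, so the strict increase in the \emph{combined} utility of the pair survives as a strict increase of the cost-free potential; and the edges must be counted once rather than twice, since with the doubly counted social welfare a swap satisfying $u_{i}(\pi^{i \leftrightarrow j}) + u_{j}(\pi^{i \leftrightarrow j}) > u_{i}(\pi) + u_{j}(\pi)$ need not raise the objective once the valuation change is non-negative and the edge change is negative. With the single-count objective these obstacles disappear and PLS membership follows.
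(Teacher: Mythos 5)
Your proposal is correct and takes essentially the same route as the paper: your single-count objective $p_{\mathcal{M}}$ is exactly half of the paper's potential $\phi(\pi) = \sum_{i \in N}\{ v_{i}(\pi(i)) + u_{i}(\pi) \}$ (the paper double-counts valuations as well as edges, so the two are ordinally identical), and your packaging as a \textsc{Basic PLS Problem} with pair-swap neighborhoods matches the paper's, which merely restricts the neighborhood to permissible pairs --- an immaterial difference, since a local optimum over all swaps is in particular one over permissible swaps. Your key observation that permissibility ($c_{i}+c_{j} \ge 0$) absorbs the subtracted costs is precisely the paper's argument (in the proof of Theorem \ref{Theorem_Existence_Two-Stable}) that $\phi$ remains a generalized ordinal potential when exchange costs are present.
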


We consider the computational hardness of finding a $2$-stable allocation for a graphical one-sided matching problem with exchange cost. 
Recall that Massand and Simon \cite{MS19} have proven the $\mathtt{PLS}$-completeness of computing a $2$-stable allocation for a graphical one-sided matching problem without exchange cost even if the maximum degree of the underlying agent graph is at most six.
Thus, it also follows the $\mathtt{PLS}$-hardness for \textsc{Find $2$-Stable Allocation}.

The next lemma gives us a stronger $\mathtt{PLS}$-hardness result: In the setting where exchange cost exists, the problem of finding a $2$-stable allocation is $\mathtt{PLS}$-hard even if the maximum degree of the underlying agent graph is at most five.

\begin{lemma}
\label{Lemma_PLS-Hardness_for_Five-Degree}
	The problem \textsc{Find $2$-Stable Allocation} is $\mathtt{PLS}$-hard even if the maximum degree of the underlying agent graph is at most five.
\end{lemma}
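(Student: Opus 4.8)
The plan is to exhibit a $\mathtt{PLS}$-reduction into \textsc{Find $2$-Stable Allocation} starting from the problem of finding a $1$-equilibrium (a pure Nash equilibrium) of a network coordination game, restricted to instances whose game graph $G$ has maximum degree at most five. I would reuse the gadget of Lemma \ref{Lemma_Reduction_from_Coordination_Games_to_One-Sided_Market} specialized to $k = 1$, since a $2k$-stable allocation with $k=1$ is exactly a $2$-stable allocation and a $k$-equilibrium with $k=1$ is exactly a pure Nash equilibrium. The crucial observation is that this gadget is \emph{degree-preserving}: the edge set of the agent graph is taken to be exactly $E$, every dummy agent is isolated, and each main agent $x_i$ inherits the degree of the vertex $i$ in $G$. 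Hence the maximum degree of the constructed agent graph equals the maximum degree of $G$, so a degree-five source yields a degree-five agent graph. Because a $\mathtt{PLS}$-reduction demands both a forward map $f$ and a solution-recovery map $g$, I would promote the decision construction of Lemma \ref{Lemma_Reduction_from_Coordination_Games_to_One-Sided_Market} to a search reduction by taking $f$ to be the map $s \mapsto \pi_s$ and $g$ to be the map $\pi \mapsto s_\pi$ with $s_\pi(i) = f(\pi(x_i))$, and then invoking properties (1)--(3) of that construction.

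The role of the exchange costs --- and the reason the degree drops from the six of Massand and Simon \cite{MS19} to five --- is as follows. In the cost-free setting of \cite{MS19}, every pairwise swap is automatically admissible, so one must spend extra agent-graph edges on gadgets that both pin each main agent to items of its own type and suppress swaps not corresponding to single-player strategy changes. With exchange costs this bookkeeping is absorbed entirely into the cost functions $c_i$: I would choose the costs so that any coalitional exchange other than a single swap $x_i \leftrightarrow d_i$ has strictly negative total cost and is therefore not permissible (property (2)), and so that every $2$-stable allocation is valid (property (1)). Since these constraints are enforced by $c_i$ rather than by adjacency, they contribute no edges to the agent graph, and the degree is governed solely by $G$.

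The main obstacle is securing a $\mathtt{PLS}$-hard source on graphs of maximum degree five. Concretely, I must establish (or cite) that finding a $1$-equilibrium of a network coordination game is $\mathtt{PLS}$-hard already on game graphs of maximum degree five; this is the quantity that propagates directly into the final degree bound. If the available hardness --- for instance via the identical-interest polymatrix construction of Cai and Daskalakis \cite{CD11}, of which network coordination games are a subclass --- comes on graphs of larger degree, the work is to splice in a degree-reduction gadget that keeps the maximum degree at five, preserves the identical-interest coordination structure, and introduces no spurious equilibria. I expect this degree-bounded hardness to be the technically delicate step, since the composition must still be a faithful search reduction after the cost gadget is attached.

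Finally, I would verify both directions of the correspondence using the utility identity $u_{x_i}(\pi) = \Delta + 2 \cdot p_i(s_\pi)$ of property (3). For soundness, a profitable single-player deviation of $i$ from $s$ translates, under $g$, into a permissible blocking pair $x_i \leftrightarrow d_i$ under $\pi$, so any $2$-stable $\pi$ forces $s_\pi$ to be a $1$-equilibrium; conversely, by properties (1) and (2) every blocking coalition under a valid $\pi$ is such a single swap, so a $1$-equilibrium $s$ yields a $2$-stable $\pi_s$. Combining this degree-preserving search reduction with the degree-five hardness of the source gives the claimed $\mathtt{PLS}$-hardness of \textsc{Find $2$-Stable Allocation} under the maximum-degree-five restriction; together with membership from Lemma \ref{Lemma_Find_Two_Stable_in_PLS}, this in fact yields $\mathtt{PLS}$-completeness in this regime.
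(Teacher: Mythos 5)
There is a genuine gap, and it is fatal to the proposed route: your source problem is not $\mathtt{PLS}$-hard. In this paper a network coordination game is \emph{unweighted} --- the payoff is $p_{i}(s) = |\{ \{i, j\} \in E ~;~ s_{i} = s_{j} \}|$ --- so the number of agreeing edges is an exact potential bounded by $|E|$, and better-response dynamics finds a $1$-equilibrium (pure Nash equilibrium) in at most $|E|$ improvement steps. Finding a $1$-equilibrium of such a game is therefore in $\mathtt{P}$, on degree-five graphs or otherwise, and no reduction from it can establish $\mathtt{PLS}$-hardness. Your hedge toward the identical-interest polymatrix games of Cai and Daskalakis does not repair this: those games have weighted (matrix) payoffs, the gadget of Lemma \ref{Lemma_Reduction_from_Coordination_Games_to_One-Sided_Market} is built for the unit-weight definition (it hard-codes $w(e) = 2$ and integrality gaps of $1$ in the cost functions), and no degree-five hardness result for weighted \emph{coordination} games is cited or proved. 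The ``degree-reduction gadget'' you defer as the technically delicate step is in fact the entire content of the lemma; note also that the standard tricks (e.g., subdividing edges) convert between coordination and anti-coordination incentives in the wrong direction here.

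The paper avoids this dead end by reducing directly from \textsc{Local Max-Cut}, which Els\"{a}sser and Tscheuschner proved $\mathtt{PLS}$-complete on graphs of maximum degree five. Structurally this forces a different item graph from the one you reuse: Lemma \ref{Lemma_Reduction_from_Coordination_Games_to_One-Sided_Market} connects items with \emph{equal} labels (a coordination incentive), whereas Max-Cut needs an \emph{anti}-coordination incentive, so the paper takes the item graph to be the complete bipartite graph between $|V|$ zero-labeled and $|V|$ one-labeled items, rewarding an agent for neighbors holding items of the opposite type. Your instinct about where the degree improvement comes from is correct and matches the paper --- the agent graph is exactly $E$ with weights $2\omega(e)$, dummies are isolated, and the side-switching discipline is enforced entirely by the cost functions (vertex agents pay $+1$ to cross between $A_{0}$ and $A_{1}$, dummies receive $-1$, so a vertex--dummy cross-swap is exactly permissible) --- but that observation only pays off once the source itself is $\mathtt{PLS}$-hard at degree five, which your proposal never secures.
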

\begin{proof}
	Here, we present a proof sketch; the formal proof can be found in Appendix \ref{Appendix__Lemma_PLS-Hardness_for_Five-Degree}.
	
	We now show a polynomial-time reduction from the problem \textsc{Local Max-Cut} to the problem \textsc{Find $2$-Stable Allocation}.
	The problem \textsc{Local Max-Cut} is a total search problem: Given an edge-weighted undirected graph $\mathcal{G} = (V, E, \{ \omega: E \to \mathbb{Z}_{> 0} \})$, find an assignment $x \in \{ 0, 1 \}^{V}$ such that for each vertex $v \in V$, $\sum_{ \{ v, u \} \in E ~:~ x(v) \neq x(u) } \omega(\{ v, u \}) \ge \sum_{ \{ v, u \} \in E ~:~ x(v) = x(u) } \omega(\{u, v\})$.
	The problem \textsc{Local Max-Cut} is one of the well-known $\mathtt{PLS}$-complete problems. Specifically, Els\"{a}sser and Tscheuschner \cite{ET11} have shown that the problem \textsc{Local Max-Cut} is $\mathtt{PLS}$-complete even if the maximum degree of the underlying graph is at most five.
	
	Our reduction is inspired by the corresponding proof by Massand and Simon\footnote{Remark that the proof in \cite{MS19} can be viewed as a polynomial-time reduction from a uniformed balanced variant of \textsc{Local Max-Cut}. Such a problem is also $\mathtt{PLS}$-complete \cite{SY91}.} \cite{MS19}.
	We also use a complete bipartite graph in which both sides have the same number of nodes with the number of agents as the item graph. Due to prove the stronger hardness results, we modify the agent graph.
	
	Let $\mathcal{G} = (V, E, \{ \omega: E \to \mathbb{Z}_{> 0} \})$ be a \textsc{Local Max-Cut} instance.
	We construct an instance $\mathcal{M}$ of a graphical one-sided matching problem with exchange cost from $\mathcal{G}$.
	First, we describe the structure of the item graph $\mathcal{I} = (A, \lambda)$. The set of items $A$ is a union of the sets of $|V|$ zero-labeled items $A_{0}$ and $|V|$ one-labeled items $A_{1}$, i.e., $A = A_{0} \cup A_{1}$. The edge set $\lambda$ of the item graph is defined as $\left\{ \{ a_{0}, a_{1} \} ~;~ \forall a_{0} \in A_{0}, \forall a_{1} \in A_{1} \right\}$. Thus, the item graph $\mathcal{I}$ is a complete bipartite graph between zero-labeled items and one-labeled items.
	Second, we describe the structure of the agent graph $\mathcal{A} = (N, \tau, \{ w: \tau \to \mathbb{Z}_{> 0} \})$. The set of agents $N$ is a union of the original vertex set $V$ and the set of $|V|$ dummy agents $D = \{ d_{1}, d_{2}, \dots, d_{|V|} \}$, i.e., $N = V \cup D$. The structure of the agent graph $\mathcal{A}$ is almost the same as the original player interaction graph $G$. The set of edges $\tau$ is defined as $E$, i.e., every dummy agent is isolated. The edge-weight function $w: \tau \to \mathbb{Z}_{> 0}$ is defined by $w(e) = 2 \cdot \omega(e)$ for each edge $e \in E$.
	Finally, we define the valuation function $v_{i}: A \to \mathbb{Z}_{\ge 0}$ and the cost function $c_{i}: A \times A \to \mathbb{Z}$ for each agent $i$ in $N$.
	Every agent has a uniform valuation function. 
	Each agent $i_{v}$ in $V$ has the cost function $c_{i_{v}}$ such that for each pair of two items $a$ and $b$ in $A$, $c_{i_{v}}(a, b) = 1$ if $a$ and $b$ belong to the different sets and $c_{i_{v}}(a, b) = 0$ otherwise.
	Each agent $i_{d}$ in $D$ has the cost function $c_{i_{d}}$ such that for each pair of two items $a$ and $b$ in $A$, $c_{i_{d}}(a, b) = -1$ if $a$ and $b$ belong to the different sets and $c_{i_{d}}(a, b) = 0$ otherwise.

	What remains is to prove that we can efficiently recover a solution of $\mathcal{G}$ from a $2$-stable allocation on $\mathcal{M}$.
	We describe how to convert an assignment from an allocation. For any allocation $\pi: N \to A$, we consider the allocation $x_{\pi} \in \{0, 1\}^{V}$ such that for each vertex $i \in V$, $x_{\pi}(i) = 0$ if and only if $\pi(i) \in A_{0}$. If $\pi: N \to A$ is a $2$-stable allocation, then the assignment $x_{\pi}$ is a solution of $\mathcal{G}$.
\end{proof}

From the above two lemmata, we immediately obtain the following theorem.

\begin{theorem}
\label{Theorem_Degree-5_Weak_Instance_PLS-complete}
	The problem \textsc{Find $2$-Stable Allocation} is $\mathtt{PLS}$-complete even if the maximum degree of the underlying agent graph is at most five.	
\end{theorem}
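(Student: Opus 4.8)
The plan is to assemble the theorem directly from the two preceding lemmata, invoking the definition of $\mathtt{PLS}$-completeness given in the preliminaries: a search problem is $\mathtt{PLS}$-complete precisely when it is both $\mathtt{PLS}$-hard and a member of $\mathtt{PLS}$. Lemma \ref{Lemma_PLS-Hardness_for_Five-Degree} already supplies the hardness half in exactly the restricted form we want, namely for instances whose underlying agent graph has maximum degree at most five. Thus the only point left to address is that the degree-restricted version of \textsc{Find $2$-Stable Allocation} still lies in $\mathtt{PLS}$.

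For the membership half I would appeal to Lemma \ref{Lemma_Find_Two_Stable_in_PLS}, which places the unrestricted problem \textsc{Find $2$-Stable Allocation} in $\mathtt{PLS}$. First I would observe that every instance $\mathcal{M}$ whose agent graph has maximum degree at most five is simply an instance of the general problem, and that the notion of a $2$-stable allocation for $\mathcal{M}$ is unaffected by imposing the degree bound. Consequently the same polynomial-time reduction to \textsc{Basic PLS Problem} that witnesses Lemma \ref{Lemma_Find_Two_Stable_in_PLS} --- with its feasible-solution set, its objective function, and its neighbourhood given by permissible blocking pairs --- applies verbatim to the degree-restricted instances, so the restricted problem inherits membership in $\mathtt{PLS}$.

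Combining the two halves then yields the claim: the maximum-degree-at-most-five restriction of \textsc{Find $2$-Stable Allocation} is $\mathtt{PLS}$-hard by Lemma \ref{Lemma_PLS-Hardness_for_Five-Degree} and belongs to $\mathtt{PLS}$ by the restriction of Lemma \ref{Lemma_Find_Two_Stable_in_PLS}, hence it is $\mathtt{PLS}$-complete. I do not anticipate any genuine obstacle, since both ingredients are already established; the only subtlety worth stating explicitly is that passing to a subclass of inputs preserves $\mathtt{PLS}$-membership, which is immediate because the witnessing reduction is oblivious to the degree bound on the agent graph.
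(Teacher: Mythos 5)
Your proposal is correct and matches the paper's own proof, which likewise obtains the theorem immediately by combining Lemma \ref{Lemma_Find_Two_Stable_in_PLS} (membership) with Lemma \ref{Lemma_PLS-Hardness_for_Five-Degree} (hardness for degree at most five). Your explicit remark that restricting to a subclass of inputs preserves $\mathtt{PLS}$-membership is a harmless elaboration of what the paper leaves implicit.
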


\subsection{Polynomial-Time Computability}
\label{Section_Polynomial-Time_Computability}
%\paragraph{Polynomial-Time Computability}
Finally, we discuss the polynomial-time computability for computing a core-stable allocation for some specific graphical one-sided matching problem with exchange cost.
We show that there is a polynomial-time algorithm to find a core-stable allocation in a graphical one-sided matching problem with exchange cost when agents and items do not comprise local structures. Thus, we can also efficiently solve a typical one-sided matching problem even if we admit exchange costs.

\begin{theorem} \label{Theorem_Polynomial-time_Computable}
	We have a polynomial-time algorithm to find a core-stable allocation when a graphical one-sided matching problem with exchange cost satisfies at least one of the following two properties:
	The underlying agent graph is either edge-empty or a uniform-weighted complete graph; and the underlying item graph is either edge-empty or complete.
\end{theorem}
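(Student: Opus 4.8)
The plan is to show that under any one of the stated conditions each agent's utility $u_i(\pi)$ depends on the allocation only through the single item $\pi(i)$ it receives; once this collapse is in place, a maximum-weight assignment is automatically core-stable and can be computed in polynomial time, with the exchange costs playing no role in the computation.

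First I would verify the collapse by inspecting the four cases packaged by the two properties. Recall $u_i(\pi) = v_i(\pi(i)) + \sum_{\{j,i\}\in\tau:\{\pi(j),\pi(i)\}\in\lambda} w(\{j,i\})$. If the agent graph is edge-empty ($\tau=\emptyset$) or the item graph is edge-empty ($\lambda=\emptyset$), the neighbour sum vanishes identically, so $u_i(\pi)=v_i(\pi(i))$. If the item graph is complete, then every pair of distinct items is adjacent, so $\{\pi(j),\pi(i)\}\in\lambda$ for all $j\neq i$ and the sum equals the fixed weighted degree $\sum_{\{j,i\}\in\tau} w(\{j,i\})$ of agent $i$; hence $u_i(\pi)=v_i(\pi(i))+\mathrm{const}_i$. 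If the agent graph is the uniform-weighted complete graph with weight $\hat{w}$, then because $\pi$ is a bijection the set $\{\pi(j):j\neq i\}$ is precisely $A\setminus\{\pi(i)\}$, so the number of contributing neighbours equals the item-degree $\deg_\lambda(\pi(i))$ and $u_i(\pi)=v_i(\pi(i))+\hat{w}\cdot\deg_\lambda(\pi(i))$. In every case I obtain a polynomial-time computable function $\tilde{u}_i:A\to\mathbb{R}$ with $u_i(\pi)=\tilde{u}_i(\pi(i))$ for all $\pi$.

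Next I would reduce to the classical assignment problem: form the complete bipartite graph on agents and items with weight $\tilde{u}_i(a)$ on the pair $(i,a)$ and let $\pi^{*}$ be a maximum-weight perfect matching, computable in polynomial time by the Hungarian method. I claim $\pi^{*}$ is core-stable. Suppose not, so some permissible coalitional exchange $(X,\mu)$ blocks $\pi^{*}$. Put $C_x := c_x(\pi^{*}(x),\pi^{*}(\mu(x)))$ and $\Delta_x := \tilde{u}_x(\pi^{*}(\mu(x)))-\tilde{u}_x(\pi^{*}(x))$. Permissibility gives $\sum_{x\in X} C_x\ge 0$, and the blocking inequalities give $\Delta_x > C_x$ for every $x\in X$, so $\sum_{x\in X}\Delta_x > \sum_{x\in X} C_x \ge 0$. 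Since every agent outside $X$ keeps its item under $(\pi^{*})^{(X,\mu)}$, the total utilities satisfy $\sum_i \tilde{u}_i\big((\pi^{*})^{(X,\mu)}(i)\big) = \sum_i \tilde{u}_i(\pi^{*}(i)) + \sum_{x\in X}\Delta_x > \sum_i \tilde{u}_i(\pi^{*}(i))$, contradicting the maximality of $\pi^{*}$. Hence no permissible blocking coalition exists and $\pi^{*}$ is core-stable.

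I expect the only delicate point to be the utility collapse, specifically the uniform-complete agent-graph case, where one must use bijectivity of $\pi$ to replace the allocation-dependent neighbour count by the fixed item-degree $\deg_\lambda(\pi(i))$; this is exactly where uniformity of the weights is essential, since non-uniform weights would make the neighbour contribution depend on which neighbours hold which adjacent items and break the collapse. After that, the core-stability argument is a one-line telescoping of total utility, and I would emphasize that the costs $c_x$ (which may be negative) enter only through the permissibility bound $\sum_{x\in X} C_x\ge 0$, never in the construction of $\pi^{*}$. The polynomial-time result for the non-graphical case promised in the introduction is the subcase where both graphs are edge-empty.
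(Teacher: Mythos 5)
Your proof is correct and takes essentially the same route as the paper: reduce to a maximum-weight perfect matching in a complete bipartite agent--item graph, then observe that any permissible blocking coalition would strictly increase the total weight (since the blocking inequalities summed over $X$ dominate $\sum_{x\in X}c_x\ge 0$), contradicting maximality. The only difference is that you explicitly verify the utility collapse $u_i(\pi)=\tilde{u}_i(\pi(i))$ in all four graph cases, a step the paper compresses into the assertion that these instances are ``straightforwardly'' polynomially equivalent to the non-graphical problem, where it then matches on weights $v_i(a)+1$.
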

\begin{proof}
	Here, we present a proof sketch; the formal proof can be found in Appendix \ref{Appendix__Theorem_Polynomial-time_Computable}.
	
	Since these problems are reducible to non-graphical ones in polynomial time, it is sufficient to show that we have a polynomial-time algorithm to find a core-stable allocation on a non-graphical one-sided matching problem with exchange cost.
	To prove this, we construct an instance of the maximum weight matching in a bipartite graph from an instance of a non-graphical one-sided matching problem with exchange cost. We can solve the maximum weight perfect matching on complete bipartite matching in polynomial time \cite{DS12}.
\end{proof}

\begin{remark}
	In the setting where exchange cost exists, the serial dictatorship mechanism does not work well.
	Consider the following non-graphical one-sided matching problem with exchange cost where two agents $i_{1}$ and $i_{2}$ exit. We denote by $a$ and $b$ two items. The valuation functions are $v_{i_{1}}(a) = 10$, $v_{i_{1}}(b) = 9$, $v_{i_{2}}(a) = 8$, and $v_{i_{2}}(b) = 5$. Hence, both agents strictly prefer $a$ to $b$.
	The cost functions are $c_{i_{1}}(a, b) = c_{i_{1}}(b, a) = -2$ and $c_{i_{2}}(a, b) = c_{i_{2}}(b, a) = 2$.
	When the serial dictatorship mechanism sequentially allocates items to agents such that $i_{1}$ has $a$ and $i_{2}$ has $b$, this allocation is not stable. The agent $i_{2}$'s utility for the item $a$ is higher than for the item $b$, even if she incurs the cost.
	Thus, depending on ordering over agents, such an algorithm fails to output a stable allocation.
\end{remark}

\section{Conclusions}
\label{Section_Conclusions}
We have investigated the computational aspects of a graphical one-sided matching problem with exchange cost, a new one-sided market model to capture the scenario where some agents profit by exchanging their assigned items with other agents.
We have proven that there is such an instance that has no $6$-stable allocation.
On the other hand, we have shown that every graphical one-sided matching problem with exchange cost always has at least one $2$-stable allocation.
However, it is $\mathtt{PLS}$-hard to find it even when the maximum degree of the underlying agent graph is at most five.

This paper leaves several open questions.
The first intriguing question is the existence of a $k$-stable allocation for a graphical one-sided matching problem with exchange cost for each integer $k$ in $\{ 3, 4, 5 \}$.
Another fascinating open question is the existence of a core-stable allocation for a graphical one-sided matching problem without exchange cost. Note that although we have proven the $\mathtt{coNP}$-completeness of checking whether a given allocation is a $k$-stable allocation for a given graphical one-sided matching problem without exchange cost; this result does not imply that the existence or the non-existence of core-stable allocations for a graphical one-sided matching problem without exchange cost.

\section*{Acknowledgment}
This work was partially supported by JST, ACT-X, Grant Number JPMJAX2101.
Also, this work included a part done while the author had been supported by JSPS KAKENHI Grant Number JP21J10845.
The author would like to thank Hiroaki Iwashita, Kentaro Kanamori, Hirofumi Suzuki, and Takuya Takagi for many useful conversations throughout the work.

%%
%% Bibliography
%%

%% Please use bibtex, 

\bibliographystyle{plain}
\bibliography{references}

\clearpage

\appendix

\section{Proof of Lemma \ref{Lemma_Reduction_from_Coordination_Games_to_One-Sided_Market}}
\label{Appendix__Lemma_Reduction_from_Coordination_Games_to_One-Sided_Market}
In this section, we prove Lemma \ref{Lemma_Reduction_from_Coordination_Games_to_One-Sided_Market}.
That is, we show a polynomial-time reduction from \textsc{Exist $k$-Equilibria} to \textsc{Exist $2k$-Stable Allocation}.

%\begin{proof}
Let $\mathcal{G} := \langle G = (V, E), m, (S_{i})_{i \in V} \rangle$ be an instance of the problem \textsc{Exist $k$-Equilibria}.
We now construct an instance $\mathcal{M} = \langle \mathcal{A} = (N, \tau, \{ w: \tau \to \mathbb{Z}_{> 0} \}), \mathcal{I} = (A, \lambda), \{ v_{i}: A \to \mathbb{Z}_{\ge 0} \}_{i \in N}, \{ c_{i}: A \times A \to \mathbb{Z} \}_{i \in N } \rangle$ of the problem \textsc{Exist $2k$-Stable Allocation} such that there is a $k$-equilibrium in $\mathcal{G}$ if and only if there is a $2k$-stable allocation for $\mathcal{M}$.

First, we construct the item graph $\mathcal{I} = (A, \lambda)$. For each player $i \in V$, we define $A_{i} := \{ a_{i, s} ~;~ s \in S_{i} \}$. The set of items $A$ is defined by $A = \bigcup_{i \in V} A_{i}$. For each item $a \in A$, there is a pair of a player $i \in V$ and a strategy $s \in S_{i}$ such that $a = a_{i, s}$. We define the label function $f: A \to [m]$ as $f(a_{i, s}) = s \in S_{i}$ for each item $a_{i, s}$ in $A$. 
We set $\lambda$ to be that for each pair of two distinct items $a$ and $b$ in $A$, $\{ a, b \} \in \lambda$ if and only if $f(a) = f(b)$.

Next, we define the agent graph $\mathcal{A} = ( N, \tau, \{ w: \tau \to \mathbb{R}_{> 0} \} )$. For each player $i \in V$, we define a partial agent set $N_{i}$ as $\{ x_{i} \} \cup \{ d_{i}^{1}, d_{i}^{2}, \dots, d_{i}^{|S_{i}| - 1} \}$. The agent set $N$ is defined by $N = \bigcup_{i \in V} N_{i}$. For every player $i$ in $V$, we call an agent in $\{ d_{i}^{1}, d_{i}^{2}, \dots, d_{i}^{|S_{i}| - 1} \}$ a dummy agent. For each player $i$ in $V$, we simply write $\{ d_{i}^{1}, d_{i}^{2}, \dots, d_{i}^{|S_{i}| - 1} \}$ by $D_{i}$. The set $\tau$ of edges is defined as $E$. Thus, every dummy agent is isolated. For each edge $e \in \tau$, we define the edge weight by $w(e) = 2$.

Finally, for each agent $i$ in $N$, we define the valuation function $v_{i}: A \to \mathbb{Z}_{\ge 0}$ and the cost function $c_{i}: A \times A \to \mathbb{Z}$ to satisfy the following three properties.
\begin{enumerate}[label=(\arabic*)]
	\item For any positive integer $k' \ge 2$, every $k'$-stable allocation $\pi: N \to A$ is valid. \label{First_Property}
	\item Let $\pi: N \to A$ be a valid allocation. Every permissible coalitional exchange $(X, \mu)$ under $\pi$ is a pairwise exchange. In particular, for each permissible coalitional exchange $(X, \mu)$ under $\pi$, there exists a coalition $K$ of players in $V$ such that for each player $i \in K$, we have two agents $x_{i}$ and $d_{i} \in D_{i}$ satisfying that $(X, \mu) = \{ x_{i} \leftrightarrow d_{i} \}_{i \in K}$. \label{Second_Property}
	\item Let $\pi: N \to A$ be a valid allocation. We define the strategy profile $s_{\pi} \in \mathbb{S}_{V}$ as $s_{\pi}(i) = f(\pi(x_{i}))$ for each player $i \in V$. Then, for each player $i \in V$, it satisfies that $u_{x_{i}}(\pi) = \Delta + 2 \cdot p_{i}(s_{\pi})$. \label{Third_Property}
\end{enumerate}
Here, an allocation $\pi: N \to A$ is valid if for every agent $\xi_{i}$ in $N_{i}$, it holds that $\pi(\xi_{i}) \in A_{i}$, and the constant $\Delta$ is defined as $10 \cdot \max_{i \in V} | \{ \{ i, j \} \in E \} |$.
We use the above three properties to prove the correctness of our reduction.

Now, we describe how to construct the valuation functions and the cost functions.
For each agent $\xi_{i} \in N_{i}$, the valuation function $v_{i}: A \to \mathbb{Z}_{\ge 0}$ is defined to be that $v_{\xi_{i}}(a) = \Delta$ if $a \in A_{i}$ and $v_{\xi_{i}}(a) = 0$ otherwise.

Next, we describe the construction of the cost functions.
For each player $i \in V$, we define the cost function $c_{x_{i}}: A \times A \to \mathbb{Z}$ by

\begin{align*}
	c_{x_{i}}(a, b) = \begin{cases}
		0 & \text{ if } a = b,\\
		1 & \text{ if } a \neq b, a \in A_{i}, \text{ and } b \in A_{i},\\
		\frac{\Delta}{2} & \text{ if } a \neq b, a \not\in A_{i}, \text{ and } b \in A_{i},\\
		- \frac{\Delta}{2} & \text{ if } a \neq b, a \not\in A_{i}, \text{ and } b \not\in A_{i},\\
		- |V| \cdot \Delta & \text{ if } a \neq b, a \in A_{i}, \text{ and } b \not\in A_{i},
	\end{cases}
\end{align*}

\noindent
and for each agent $d_{i} \in D_{i}$, we define the cost function $c_{d_{i}}: A \times A \to \mathbb{Z}$ by

\begin{align*}
	c_{d_{i}}(a, b) = \begin{cases}
		0 & \text{ if } a = b,\\
		-1 & \text{ if } a \neq b, a \in A_{i}, \text{ and } b \in A_{i},\\
		\frac{\Delta}{2} & \text{ if } a \neq b, a \not\in A_{i}, \text{ and } b \in A_{i},\\
		- \frac{\Delta}{2} & \text{ if } a \neq b, a \not\in A_{i}, \text{ and } b \not\in A_{i},\\
		- |V| \cdot \Delta & \text{ if } a \neq b, a \in A_{i}, \text{ and } b \not\in A_{i}.
	\end{cases}
\end{align*}

We complete the construction of the graphical one-sided matching problem with exchange cost $\mathcal{M}$.
It is easy to see that our reduction can be computed in polynomial time.

From now on, we show that our construction satisfies the above three properties.
First, we prove Property (\ref{First_Property}): Every $k'$-stable allocation of $\mathcal{M}$ is valid for any positive integer $k' \ge 2$.

\begin{claim}
\label{Claim_Stable_is_Valid}
	For any positive integer $k' \ge 2$, every $k'$-stable allocation $\pi: N \to A$ is valid.
\end{claim}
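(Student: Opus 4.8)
The plan is to prove the contrapositive: I will show that every allocation $\pi$ that is \emph{not} valid admits a blocking permissible pair $\xi_1 \leftrightarrow \xi_2$. Since a $k'$-stable allocation with $k' \ge 2$ forbids, in particular, all blocking permissible exchanges of size $2$, exhibiting one such size-$2$ exchange whenever validity fails immediately yields that every $k'$-stable allocation must be valid. Thus it suffices to find a single blocking permissible pair under any non-valid $\pi$.

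So assume $\pi$ is not valid. First I would locate the two agents to swap by a counting argument. Non-validity gives an agent $\xi_1 \in N_{i_1}$ holding a foreign item $a_1 := \pi(\xi_1) \in A_{i_2}$ with $i_2 \ne i_1$. Because this item of $A_{i_2}$ is held by an agent outside $N_{i_2}$, at most $|A_{i_2}| - 1$ items of $A_{i_2}$ remain for the $|N_{i_2}| = |A_{i_2}|$ agents of $N_{i_2}$; hence by pigeonhole some $\xi_2 \in N_{i_2}$ holds a foreign item $a_2 := \pi(\xi_2) \notin A_{i_2}$. Note $\xi_1 \ne \xi_2$ since $i_1 \ne i_2$, so $a_1 \ne a_2$.

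Next I would consider the pair exchange $\xi_1 \leftrightarrow \xi_2$, after which $\xi_2$ receives the home item $a_1 \in A_{i_2}$ and $\xi_1$ receives $a_2$, and verify permissibility and blocking by direct computation against the cost functions, using the separation $\Delta/2 > \Delta/5$. Recall every edge has weight $w = 2$ and the maximum degree equals $\Delta/10$, so each agent's neighbour-contribution to its utility lies in $[0, \Delta/5]$; hence the change of this term under the swap is at most $\Delta/5$ in absolute value, and is exactly $0$ for the isolated dummy agents. For $\xi_2$ the valuation rises from $0$ to $\Delta$ while the incurred cost is $c_{\xi_2}(a_2, a_1) = \Delta/2$ (the ``$\notin A_{i_2}$ to $\in A_{i_2}$'' case), so the blocking surplus is at least $\Delta - \Delta/2 - \Delta/5 > 0$. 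For $\xi_1$ I split into two cases: if $a_2 \in A_{i_1}$ then $\xi_1$ also gains a home item with cost $c_{\xi_1}(a_1, a_2) = \Delta/2$, and the summed cost is $\Delta/2 + \Delta/2 = \Delta \ge 0$; if $a_2 \notin A_{i_1}$ then $\xi_1$'s valuation is unchanged but its cost is the reward $c_{\xi_1}(a_1, a_2) = -\Delta/2$, and the summed cost is $-\Delta/2 + \Delta/2 = 0 \ge 0$. Either way the pair is permissible, and in both cases the blocking surplus for $\xi_1$ is at least $\Delta/2 - \Delta/5 > 0$.

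The main obstacle, and the part requiring care, is the simultaneous bookkeeping of the two requirements across the cost-function cases: one must check that the \emph{sum} of the two signed costs stays non-negative (permissibility) while \emph{each} agent's valuation gain minus its own cost strictly beats the $\pm\Delta/5$ fluctuation of the neighbour term (blocking). The cost functions are engineered exactly for this: a foreign-to-home move costs $+\Delta/2$ and a foreign-to-foreign move is rewarded by $-\Delta/2$, so the two contributions cancel to $0$ (or add to $\Delta$), which is precisely what makes the pair both permissible and strictly blocking. Assembling these cases produces a blocking permissible pair, contradicting $2$-stability and hence establishing the claim.
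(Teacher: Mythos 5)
Your proof is correct and follows essentially the same route as the paper: both locate, via the pigeonhole count $|N_{i}| = |A_{i}|$, a pair in which one agent moves foreign-to-home (cost $\Delta/2$) and the other moves foreign-to-home or foreign-to-foreign (cost $\pm\Delta/2$), then verify permissibility by the cancellation of the two signed costs and blocking via the separation $\Delta/2 > \Delta/5 \ge$ (any neighbour-term fluctuation); your $(\xi_{1}, \xi_{2})$ is exactly the paper's $(\xi_{j}, \xi_{i})$ up to relabeling. The only cosmetic difference is that you phrase it as a contrapositive while the paper argues by contradiction, which is the same argument.
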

\begin{proof}
	For the sake of contradiction, we suppose that there is an agent $\xi_{i} \in N_{i}$ such that $\pi(\xi_{i}) \not\in A_{i}$. Then, there exists another agent $\xi_{j} \in N_{j}$ with $j \neq i$ such that $\pi(\xi_{j}) \in A_{i}$.
	These two agents $\xi_{i}$ and $\xi_{j}$ are permissible under the allocation $\pi$ since
	\begin{align*}
		c_{\xi_{i}}(\pi(\xi_{i}), \pi(\xi_{j})) + c_{\xi_{j}}(\pi(\xi_{j}), \pi(\xi_{i})) 
			= \frac{\Delta}{2} + c_{\xi_{j}}(\pi(\xi_{j}), \pi(\xi_{i}))
			\ge \frac{\Delta}{2} - \frac{\Delta}{2} = 0.
	\end{align*}
	We now show that the above permissible pair $(\xi_{i}, \xi_{j})$ blocks the allocation $\pi$.
	First, we can easily see that the utility of the agent $\xi_{i}$ is strictly increasing by exchanging with $\xi_{j}$:
	\begin{align*}
		u_{\xi_{i}}(\pi^{\xi_{i} \leftrightarrow \xi_{j}}) - c_{\xi_{i}}(\pi(\xi_{i}), \pi(\xi_{j}))
			&= \Delta + \sum_{ \{ \xi_{i}, \eta \} \in \tau ~:~ \{ \pi^{\xi_{i} \leftrightarrow \xi_{j}}(\xi_{i}), \pi^{\xi_{i} \leftrightarrow \xi_{j}}(\eta) \} \in \lambda } w( \{ \xi_{i}, \eta \}) - \frac{\Delta}{2}\\
			&\ge \frac{\Delta}{2} = 5 \cdot \max_{i \in V} |\{ \{ i, j \} \in E \}|\\
			&> 2 \cdot |\{ \xi_{i}, \eta \} \in E \}| \ge u_{\xi_{i}}(\pi).
	\end{align*}
	Next, we prove that the utility of the agent $\xi_{j}$ is strictly increasing by exchanging with $\xi_{i}$.
	There are two cases: One is $\pi(\xi_{i}) \in A_{j}$, and the other is $\pi(\xi_{i}) \in A \setminus (A_{i} \cup A_{j})$.
	When the agent $\xi_{i}$ is allocated the item in $A_{j}$ by $\pi$, we have that
	\begin{align*}
		u_{\xi_{j}}(\pi^{\xi_{i} \leftrightarrow \xi_{j}}) - c_{\xi_{j}}(\pi(\xi_{j}), \pi(\xi_{i}))
			&= \Delta + \sum_{ \{ \xi_{j}, \eta \} \in \tau ~:~ \{ \pi^{\xi_{i} \leftrightarrow \xi_{j}}(\xi_{j}), \pi^{\xi_{i} \leftrightarrow \xi_{j}}(\eta) \} \in \lambda } w( \{ \xi_{j}, \eta \}) - \frac{\Delta}{2}\\
			&\ge \frac{\Delta}{2} = 5 \cdot \max_{i \in V} |\{ \{ i, j \} \in E \}|\\
			&> 2 \cdot |\{ \xi_{j}, \eta \} \in E \}| \ge  u_{\xi_{j}}(\pi).
	\end{align*}
	On the other hand, when the agent $\xi_{i}$ is allocated the item in $A \setminus \left( A_{i} \cup A_{j} \right)$ by $\pi$, we have that
	\begin{align*}
		u_{\xi_{j}}(\pi^{\xi_{i} \leftrightarrow \xi_{j}}) - c_{\xi_{j}}(\pi(\xi_{j}), \pi(\xi_{i}))
			&= u_{\xi_{j}}(\pi^{\xi_{i} \leftrightarrow \xi_{j}}) + \frac{\Delta}{2}\\
			&\ge \frac{\Delta}{2} = 5 \cdot \max_{i \in V} |\{ \{ i, j \} \in E \}|\\
			&> 2 \cdot |\{ \xi_{j}, \eta \} \in E \}| \ge  u_{\xi_{j}}(\pi).
	\end{align*}
	Hence, the coalitional exchange $\xi_{i} \leftrightarrow \xi_{j}$ is blocking $\pi$. This is a contradiction.
\end{proof}

Next, we prove Property (\ref{Second_Property}): Every permissible coalitional exchange $(X, \mu)$ under a valid allocation $\pi: N \to A$ is a pairwise exchange. Specifically, given a valid allocation $\pi: N \to A$, for each permissible coalitional exchange $(X, \mu)$, there is a coalition $K$ of players in $V$ such that for each $i \in K$, we have two agents $x_{i}$ and $d_{i} \in D_{i}$ satisfying that $(X, \mu)$ coincides with $\{ x_{i} \leftrightarrow d_{i} \}_{i \in K}$. 

\begin{claim}
\label{Claim_Permissible_Exchange}
	Let $\pi: N \to A$ be a valid allocation.
	Every permissible coalitional exchange $(X, \mu)$ under $\pi$ is a pairwise exchange.
	In particular, for each permissible coalitional exchange $(X, \mu)$ under $\pi$, there exists a coalition $K$ of players in $V$ such that for each player $i \in K$, we have two agents $x_{i}$ and $d_{i} \in D_{i}$ satisfying that $(X, \mu) = \{ x_{i} \leftrightarrow d_{i} \}_{i \in K}$.
\end{claim}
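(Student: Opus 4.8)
The plan is to exploit the large penalty $-|V| \cdot \Delta$ that both cost functions assign to an agent who surrenders an item of her own block $A_{i}$ in exchange for an item outside it, together with the $+1/-1$ bookkeeping that $x_{i}$ and the dummies respectively incur for internal swaps. First I would record two elementary facts. Since $\pi$ is valid, every agent $\xi_{i} \in N_{i}$ starts with $\pi(\xi_{i}) \in A_{i}$; and since $\mu$ is fixed-point-free and $\pi$ is a bijection, $\pi(\mu(\xi_{i})) \neq \pi(\xi_{i})$, so the case $a = b$ of the cost functions never occurs and each agent's cost is determined solely by whether the received item $\pi(\mu(\xi_{i}))$ lies in $A_{i}$ or not. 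Concretely, $x_{i}$ contributes $+1$ and each dummy $d_{i}$ contributes $-1$ when the received item stays in $A_{i}$, while any agent contributes the penalty $-|V| \cdot \Delta$ when she leaves $A_{i}$.

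The key step is to show that no agent may leave her block. The only positive contributions come from $x_{i}$-agents receiving an item in $A_{i}$, and there are at most $|V|$ of these, so the total positive mass is at most $|V|$. If some agent received an item outside her block, her contribution alone would be $-|V| \cdot \Delta$, forcing the overall sum to be at most $|V| - |V| \cdot \Delta = |V|(1 - \Delta) < 0$ because $\Delta \ge 10$; this contradicts permissibility. Hence $\pi(\mu(\xi_{i})) \in A_{i}$ for every $\xi_{i} \in X$, and by validity the only agents holding $A_{i}$-items are those in $N_{i}$, so $\mu(\xi_{i}) \in N_{i}$. Consequently $\mu$ restricts to a fixed-point-free permutation of each block $X_{i} := X \cap N_{i}$, and $X = \bigcup_{i} X_{i}$ splits cleanly along the players.

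Finally I would analyse block by block. Writing $\epsilon_{i} \in \{0, 1\}$ for the indicator that $x_{i} \in X_{i}$ and $\delta_{i} := |X_{i} \cap D_{i}|$, the internal cost of block $i$ equals $\epsilon_{i} - \delta_{i}$, and permissibility reads $\sum_{i} (\epsilon_{i} - \delta_{i}) \ge 0$. Every nonempty block satisfies $|X_{i}| \ge 2$, since a fixed-point-free permutation needs at least two elements; hence its contribution is nonpositive, being exactly $-\delta_{i} \le -2$ when $x_{i} \notin X_{i}$ and $1 - \delta_{i} \le 0$ when $x_{i} \in X_{i}$. Since the contributions sum to a nonnegative number yet each is at most $0$, every nonempty block must contribute exactly $0$, which forces $\epsilon_{i} = \delta_{i} = 1$, i.e. $X_{i} = \{ x_{i}, d_{i} \}$ for a single dummy $d_{i}$. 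The unique fixed-point-free permutation of a two-element set swaps them, so $(X, \mu)$ coincides with $\{ x_{i} \leftrightarrow d_{i} \}_{i \in K}$ where $K := \{ i \in V ~;~ X_{i} \neq \emptyset \}$, as claimed. The main obstacle is the accounting in this last step: one must check that every per-block contribution is genuinely nonpositive, so that the nonnegativity of the total pins each block down to an exact $x_{i}$–$d_{i}$ swap.
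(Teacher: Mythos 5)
Your proof is correct and takes essentially the same route as the paper's: the $-|V| \cdot \Delta$ penalty confines every agent of $N_{i}$ to items in $A_{i}$ (so $\mu$ decomposes into fixed-point-free permutations of the blocks $X_{i} = X \cap N_{i}$), and the $+1/-1$ bookkeeping then forces each nonempty block to be a single $x_{i} \leftrightarrow d_{i}$ swap. If anything, your explicit accounting (each block contributes $\epsilon_{i} - \delta_{i} \le 0$ while the total must be $\ge 0$, hence every block contributes exactly $0$) is slightly tighter than the paper's, which asserts $x_{i} \in X_{i}$ without argument and phrases the final step as a blockwise contradiction.
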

\begin{proof}
	By the definition of the cost function, it is easy to see that any exchange such that an agent in $N_{i}$ receives an item in $A \setminus A_{i}$ is not permissible. 
	Hence, for each permissible coalitional exchange $(X, \mu)$ under the vlid allocation $\pi$, there exists a coalition $K$ of players in $V$ such that each plyaer $i \in K$ has a subset $X_{i} \subseteq \{ x_{i} \} \cup D_{i}$, and it holds that $X = \bigcup_{i \in K} X_{i}$ and $\mu(\xi_{i}) \in X_{i} \setminus \{ x_{i} \}$ for each player $i \in K$.
	We replace the permissible coalitional exchange $(X, \mu)$ under $\pi$ with $\{ (X_{i}, \mu_{i}) \}_{i \in K}$, where for each $i \in [k]$, $\mu_{i}: X_{i} \to X_{i}$ is a bijectiosn such that $\mu_{i}(\xi_{i}) \neq \xi_{i}$ for each $\xi_{i} \in X_{i}$.
	We can easily see that the agent $x_{i}$ contains in $X_{i}$ for each player $i \in K$.
	
	For each player $i \in K$, if the coalition $X_{i}$ contains exactly one dummy agent $d_{i} \in D_{i}$, then it satisfies that $c_{x_{i}}(\pi(x_{i}), \pi(d_{i})) + c_{d_{i}}(\pi(d_{i}), \pi(x_{i})) = 0$.
	We now show that for each player $i \in K$, the subset $X_{i}$ does not contain more than one dummy agent in $D_{i}$.
	For the sake of contradiction, we suppose that at least two dummy agents in $D_{i}$ are contained in $X_{i}$. Then, for any bijection $\mu_{i}: X_{i} \to X_{i}$ satisfying that $\mu_{i}(x) \neq x$ for each $x \in X_{i}$, it holds that $\sum_{x \in X_{i}} c_{x}(\pi(x), \pi(\mu(x))) \le 2 - |X_{i}| < 0$. This contradicts that $(X, \mu)$ is permissible.
\end{proof}

Finally, we observe the relationship between a valid allocation for the graphical one-sided matching problem with exchange cost $\mathcal{M}$ and a strategy profile of the network coordination game $\mathcal{G}$. We prove Property (\ref{Third_Property}): Let $\pi: N \to A$ be a valid allocation. We define the strategy profile $s_{\pi} \in \mathbb{S}_{V}$ as $s_{\pi}(i) = f(\pi(x_{i}))$ for each player $i \in V$. Then, for each player $i \in V$, it satisfies that $u_{x_{i}}(\pi) = \Delta + 2 \cdot p_{i}(s_{\pi})$.

\begin{claim}
\label{Claim_Utility}
	Let $\pi: N \to A$ be a valid allocation. We define the strategy profile $s_{\pi} \in \mathbb{S}_{V}$ as $s_{\pi}(i) = f(\pi(x_{i}))$ for each player $i \in V$. Then, for each player $i \in V$, it satisfies that $u_{x_{i}}(\pi) = \Delta + 2 \cdot p_{i}(s_{\pi})$.
\end{claim}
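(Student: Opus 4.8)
The plan is to verify the identity by directly unfolding the definition of the utility $u_{x_{i}}$ and matching each of its terms with the corresponding quantity in the network coordination game $\mathcal{G}$.

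First I would extract the valuation term. Since $\pi$ is valid, we have $\pi(x_{i}) \in A_{i}$, and by the construction of the valuation function $v_{x_{i}}(a) = \Delta$ for every $a \in A_{i}$; hence $v_{x_{i}}(\pi(x_{i})) = \Delta$. This accounts for the additive constant $\Delta$ in the claimed formula.

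Next I would handle the neighborhood sum. Recall that $\tau = E$ and that every dummy agent is isolated, so the only neighbors of $x_{i}$ in the agent graph are the agents $x_{j}$ with $\{ i, j \} \in E$. Therefore
\[
	u_{x_{i}}(\pi) = \Delta + \sum_{ \{ i, j \} \in E ~:~ \{ \pi(x_{j}), \pi(x_{i}) \} \in \lambda } w(\{ x_{j}, x_{i} \}).
\]
It then remains to rewrite the index set and the summand. By the construction of the item graph, $\{ \pi(x_{j}), \pi(x_{i}) \} \in \lambda$ holds if and only if $f(\pi(x_{j})) = f(\pi(x_{i}))$; and since $\pi$ is valid, the labels $s_{\pi}(i) = f(\pi(x_{i})) \in S_{i}$ and $s_{\pi}(j) = f(\pi(x_{j})) \in S_{j}$ are well-defined strategies, so this adjacency condition is exactly $s_{\pi}(i) = s_{\pi}(j)$. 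Substituting the uniform edge weight $w(e) = 2$ gives
\[
	u_{x_{i}}(\pi) = \Delta + 2 \cdot \left| \{ \{ i, j \} \in E ~:~ s_{\pi}(i) = s_{\pi}(j) \} \right| = \Delta + 2 \cdot p_{i}(s_{\pi}),
\]
which is the desired equality.

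The argument is essentially a substitution of definitions, so there is no real obstacle. The only points that require care are confirming that the dummy agents contribute nothing to the neighborhood sum (ensured by their being isolated in $\tau$) and that the validity of $\pi$ at both $x_{i}$ and $x_{j}$ makes the identification $s_{\pi}(\cdot) = f(\pi(x_{\cdot}))$ a legitimate strategy profile of $\mathcal{G}$.
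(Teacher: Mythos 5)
Your proposal is correct and follows essentially the same route as the paper's proof: both unfold the definition of $u_{x_{i}}$, use validity of $\pi$ to evaluate the valuation term as $\Delta$, identify the item-graph adjacency $\{\pi(x_{i}),\pi(x_{j})\}\in\lambda$ with the label condition $s_{\pi}(i)=s_{\pi}(j)$, and substitute the uniform weight $w(e)=2$ to recover $2\cdot p_{i}(s_{\pi})$. Your explicit remarks that the dummy agents are isolated in $\tau$ and that validity makes $s_{\pi}$ a legitimate strategy profile are points the paper leaves implicit, but they do not change the argument.
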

\begin{proof}
	For each player $i \in V$, we have that
	\begin{align*}
		u_{x_{i}}(\pi) 
			&= v_{x_{i}}(\pi(x_{i})) + \sum_{\{ x_{i}, y \} \in \tau ~:~ \{ \pi(x_{i}), \pi(y) \} \in \lambda } w( \{ x_{i}, y \} ) \\
			&= \Delta + \sum_{\{ x_{i}, y \} \in \tau ~:~ f(\pi(x_{i})) = f(\pi(y)) } w( \{ x_{i}, y \} ) \\
			&= \Delta + \sum_{\{ x_{i}, x_{j} \} \in E ~:~ f(\pi(x_{i})) = f(\pi(x_{j})) } w( \{ x_{i}, x_{j} \} ) \\
			&= \Delta + \sum_{\{ x_{i}, x_{j} \} \in E ~:~ s_{\pi}(i) = s_{\pi}(j) } w( \{ x_{i}, x_{j} \} ) \\
			&= \Delta + 2 \cdot \left| \left\{  \{ x_{i}, x_{j} \} \in E ~:~ s_{\pi}(i) = s_{\pi}(j) \right\} \right| \\
			&= \Delta + 2 \cdot p_{i}(s_{\pi}).
	\end{align*}
	Therefore, the claim follows.
\end{proof}

What remains is to prove that the network coordination game $\mathcal{G}$ has a $k$-equilibrium if and only if the graphical one-sided matching problem without exchange cost $\mathcal{M}$ has a $2k$-stable allocation.

%	Let $k \ge 1$ be any positive integer.
We first assume that $\mathcal{M}$ has a $2k$-stable allocation $\pi: N \to A$. Then, we now show that $\mathcal{G}$ has a $k$-equilibrium.
By Claim \ref{Claim_Stable_is_Valid}, the allocation $\pi$ is valid. We define the strategy profile $s_{\pi} = (s_{\pi}(i))_{i \in V} \in \mathbb{S}_{V}$ with respect to $\pi$ by $s_{\pi}(i) = f(\pi(x_{i}))$ for each player $i \in V$.

From now on, we prove that $s_{\pi}$ is a $k$-equilibirum in $\mathcal{G}$.
For the sake of contradiction, we suppose that the strategy profile $s_{\pi}$ is not a $k$-equilibrium. Thus, there is a profitable deviation $s_{\pi} \mapsto_{K} s$ where $K$ is a coalition of players in $V$ with the size $|K| \le k$.
For each player $i \in K$, we denote by $d_{i}$ the dummy agent assigned the item $a_{i, s(i)}$ by the allocation $\pi$, i.e., $d_{i} := \pi^{-1}(a_{i, s(i)})$.
We consider the pairwise exchange $\{ x_{i} \leftrightarrow d_{i} \}_{i \in K}$. It is easy to see that this pairwise exchange is permissible.
Since the deviation $s_{\pi} \mapsto_{K} s$ is profitable, it holds that for each player $i \in K$, $p_{i}(s_{\pi}) < p_{i}(s)$.
Thus, the pairwise exchange $\{ x_{i} \leftrightarrow d_{i} \}_{i \in K}$ is blocking $\pi$ because for each player $i \in K$, it holds that 

\begin{align*}
	u_{x_{i}} \left(\pi^{(X, \mu)} \right) - u_{x_{i}}(\pi) 
		&= \left( \Delta + 2 \cdot p_{i}(s)  \right) - \left( \Delta + 2 \cdot p_{i}(s_{\pi}) \right) \\
		&= 2 \left( p_{i}(s) - p_{i}(s_{\pi}) \right)
		\ge 2 > 1 = c_{x_{i}}\left( \pi(x_{i}), \pi^{(X, \mu)}(x_{i}) \right)
\end{align*}

\noindent
and

\begin{align*}
	u_{d_{i}} \left( \pi^{(X, \mu)} \right) - c_{d_{i}} \left( \pi(d_{i}), \pi^{(X, \mu)}(d_{i}) \right) 
		= \Delta + 1 > \Delta = u_{d_{i}}(\pi).
\end{align*}

\noindent
Note that the size of coalition $\bigcup_{i \in K} \{ x_{i}, d_{i} \}$ is bounded by $2k$. This is a contradiction.
Hence, the strategy profile $s_{\pi}$ is a $k$-equilibrium. The network coordination game $\mathcal{G}$ has a $k$-equilibrium.

Next, we assume that $\mathcal{G}$ has a $k$-equilibrium $s^{*} = (s_{i}^{*})_{i \in V} \in \mathbb{S}_{V}$. Then, we now show that $\mathcal{M}$ has a $2k$-stable allocation. 
Before constructing a corresponding allocation, we introduce a rank function $\mathrm{rank}_{i}: S_{i} \to [m]$ for each player $i \in V$. For each player $i \in V$ and each pure strategy $\sigma_{i} \in S_{i}$, we define $\mathrm{rank}_{i}(\sigma_{i}) := | \{ s \in S_{i} ~;~ s \le \sigma_{i} \}|$.
We construct an allocation $\pi_{s^{*}}: N \to A$ from $s^{*}$ as follows: For each agent $i \in V$, $\pi_{s^{*}}(x_{i}) = a_{i, s_{i}^{*}}$, for each dummy agent $d_{i}^{k} \in D_{i}$ with $k < \mathrm{rank}_{i}(s_{i})$, $\pi_{s^{*}}(d_{i}^{k}) = a_{i, \mathrm{rank}_{i}^{-1}(k)}$, and for each dummy agent $d_{i}^{k} \in D_{i}$ with $k \ge \mathrm{rank}_{i}(s_{i})$, $\pi_{s^{*}}(d_{i}^{k}) = a_{i, \mathrm{rank}_{i}^{-1}(k+1)}$.

From now on, we prove that $\pi_{s^{*}}$ is a $2k$-stable allocation on $\mathcal{M}$.
For the sake of contradiction, we suppose that the allocation $\pi_{s^{*}}$ is not $2k$-stable; thus, there exists a permissible coalitional exchange $(X, \mu)$ with $|X| \le 2k$ that is blocking the allocation $\pi_{s^{*}}$.
From Claim \ref{Claim_Permissible_Exchange}, we have a coalition $K$ of players in $V$ such that for such player $i \in K$, there is a dummy agent $d_{i} \in D_{i}$ such that the pairwise exchange $\{ x_{i} \leftrightarrow d_{i} \}_{i \in K}$ coincides with the coalitional exchange $(X, \mu)$.
We denote by $t_{\pi_{s^{*}}^{(X, \mu)}} = (t_{\pi_{s^{*}}^{(X, \mu)}}(i))_{i \in V} \in \mathbb{S}_{V}$ the strategy profile that is to be a deviation of players in $K$ from the strategy profile $s^{*}$ according to the allocation $\pi_{s^{*}}^{(X, \mu)}$. That is, $t_{\pi_{s^{*}}^{(X, \mu)}}(i) := \pi_{s^{*}}^{(X, \mu)}(x_{i}) \neq \pi_{s^{*}}(x_{i}) = s^{*}(i)$ for each player $i \in K$, and $t_{\pi_{s^{*}}^{(X, \mu)}}(i) := \pi_{s^{*}}^{(X, \mu)}(x_{i}) = \pi_{s^{*}}(x_{i}) = s^{*}(i)$ for each player $i \in V \setminus K$.
Since the coalitional exchange $(X, \mu)$ is blocking $\pi$, for every player $i \in K$, the following inequality holds: \[ u_{x_{i}}\left( \pi^{(X, \mu)} \right) - c_{x_{i}} \left( \pi(x_{i}), \pi^{(X, \mu)} \right) = \Delta + 2 \cdot p_{i} \left( t_{\pi_{s^{*}}^{(X, \mu)}} \right) - 1 > \Delta + 2 \cdot p_{i} \left( s^{*} \right) = u_{x_{i}}(\pi). \] Hence, we have that $p_{i} \left( t_{\pi_{s^{*}}^{(X, \mu)}} \right) > p_{i} \left( s^{*} \right)$. This implies that $s^{*} \mapsto_{K} t_{\pi_{s^{*}}^{(X, \mu)}}$ is a profitable deviation.
Furthermore, the number of players in $K$ is bounded by $k$. Thus, $s^{*}$ is not a $k$-equilibrium. This is a contradiction. Hence, the graphical one-sided matching problem with exchange cost $\mathcal{M}$ has a $2k$-stable allocation.
%\end{proof}

\section{Proof of Theorem \ref{Theorem_Check_Stable_without_Costs_coNP}}
\label{Appendix__Theorem_Check_Stable_without_Costs_coNP}
In this section, we prove Theorem \ref{Theorem_Check_Stable_without_Costs_coNP}.
That is, we show that the problem \textsc{Check Stable Allocation without Exchange Cost} is $\mathtt{coNP}$-complete.

%\begin{proof}
	It is easy to see that the problem \textsc{Check Stable Allocation without Exchange Cost} belongs to $\mathtt{coNP}$.
	Therefore, the rest of this proof shows the $\mathtt{coNP}$-hardness of the problem \textsc{Check Stable Allocation without Exchange Cost}.
	We now show a polynomial-time reduction from \textsc{Check Equilibrium} to \textsc{Check Stable Allocation without Exchange Cost}. Recall that the problem \textsc{Check Equilibrium} has been proven the $\mathtt{coNP}$-completeness by Apt, de Keijzer, Rhan, Sch\"{a}fer, and Simon \cite{AKRSS17}.
	
	\begin{theorem}[Apt, de Keijzer, Rhan, Sch\"{a}fer, and Simon \cite{AKRSS17}]
		The problem \textsc{Check Equilibrium} is $\mathtt{coNP}$-complete.
	\end{theorem}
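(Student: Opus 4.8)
The plan is to establish membership in $\mathtt{coNP}$ and then $\mathtt{coNP}$-hardness by a reduction from \textsc{Clique}.

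For membership, I would show that the complementary problem — deciding whether a given profile $s$ admits \emph{some} profitable deviation $s \mapsto_{K} t$ with $|K| \le k$ — lies in $\mathtt{NP}$. A nondeterministic algorithm guesses the coalition $K \subseteq V$ (a subset of size at most $k$, hence of polynomial description length even when $k = |V|$) together with the new strategies $(t_{i})_{i \in K}$, sets $t_{j} = s_{j}$ for $j \notin K$, and checks in polynomial time that $p_{i}(s) < p_{i}(t)$ for every $i \in K$. Since $s$ is a $k$-equilibrium exactly when no such witness exists, \textsc{Check Equilibrium} belongs to $\mathtt{coNP}$.

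For hardness I would reduce from \textsc{Clique}: given a graph $H = (V_{H}, E_{H})$ and an integer $\ell \ge 3$, I build a network coordination game $\mathcal{G}$, a profile $s$, and set $k := \ell$, arranging that $s$ is a $k$-equilibrium if and only if $H$ has no clique of size $\ell$. The interaction graph $G$ has one \emph{main} vertex $p_{v}$ for each $v \in V_{H}$, joined exactly as in $H$, plus $\ell - 2$ frozen \emph{pendant} vertices attached to each $p_{v}$. The colours are a private colour $c_{v}$ for each $v$ together with one global coordination colour $\gamma$. Each pendant of $p_{v}$ has the singleton strategy set $\{ c_{v} \}$ (so it cannot move), and each main vertex has $S_{p_{v}} = \{ c_{v}, \gamma \}$. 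In the profile $s$ every vertex of the $v$-block plays $c_{v}$, so $p_{v}$ matches its $\ell - 2$ pendants and, because the $c_{v}$ are distinct, none of its main neighbours; thus $p_{p_{v}}(s) = \ell - 2$ for every main vertex. The construction is clearly polynomial, and one may assume $\ell \ge 3$ since \textsc{Clique} remains $\mathtt{NP}$-complete there.

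The crux, which I expect to be the main obstacle, is the equivalence: a profitable deviation of size at most $\ell$ exists in $(\mathcal{G}, s)$ if and only if $H$ contains an $\ell$-clique. A pendant has degree one and already attains its maximal payoff (and is frozen anyway), so every deviating coalition $K$ consists of main vertices; moreover each $p_{v} \in K$ has only the alternative $\gamma$, so it must abandon $c_{v}$ and lose all its pendants. After the switch, $p_{v}$ matches precisely those main neighbours that also moved to $\gamma$, namely its neighbours inside $K$, so its new payoff equals $\deg_{G[K]}(v)$, where $G[K]$ is the subgraph of $H$ induced by $K$. The deviation is profitable exactly when $\deg_{G[K]}(v) \ge \ell - 1$ for every $v \in K$; since $|K| \le \ell$ forces $\deg_{G[K]}(v) \le |K| - 1 \le \ell - 1$, profitability forces $|K| = \ell$ with $G[K]$ complete, i.e.\ $K$ is an $\ell$-clique, and conversely any $\ell$-clique yields such a deviation. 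The restriction of each $S_{p_{v}}$ to the two colours $\{ c_{v}, \gamma \}$ is exactly what eliminates spurious deviations (to foreign private colours or to several distinct coordination colours), so the delicate point is verifying that this two-colour strategy space leaves no escape route. Combining the directions, $s$ is an $\ell$-equilibrium iff $H$ has no $\ell$-clique, so $(H, \ell) \mapsto (\mathcal{G}, s, \ell)$ is a polynomial reduction from \textsc{Clique} to the complement of \textsc{Check Equilibrium}; with the $\mathtt{coNP}$ membership above, this yields $\mathtt{coNP}$-completeness.
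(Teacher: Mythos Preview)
The paper does not prove this theorem: it is quoted verbatim as an external result of Apt, de Keijzer, Rahn, Sch\"{a}fer, and Simon \cite{AKRSS17} and used as a black box in the reduction to \textsc{Check Stable Allocation without Exchange Cost}. So there is no ``paper's own proof'' to compare against.

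That said, your argument is self-contained and correct. The $\mathtt{coNP}$ membership is routine. For hardness, your gadget is clean: forcing each main vertex $p_{v}$ to choose between its private colour $c_{v}$ (which pins its payoff at exactly $\ell - 2$ via the frozen pendants) and the single global colour $\gamma$ makes any deviation of a coalition $K$ of main vertices give each $p_{v} \in K$ payoff exactly $\deg_{H[K]}(v)$; strict improvement for all of $K$ then forces $\deg_{H[K]}(v) \ge \ell - 1$ while $|K| \le \ell$, hence $K$ must be an $\ell$-clique, and conversely any $\ell$-clique works. The two-colour strategy set $\{c_{v}, \gamma\}$ is precisely what rules out stray deviations, as you note. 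Two minor points you should make explicit in a write-up: (i) if $\ell > |V_{H}|$ the \textsc{Clique} instance is trivially a no-instance, so you may assume $\ell \le |V_{H}|$, which keeps the $(\ell - 2)\,|V_{H}|$ pendants polynomial in the input; (ii) your notation $G[K]$ for the subgraph of $H$ induced by $K$ clashes with $G$ denoting the game graph---write $H[K]$ instead.
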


	Let $\left\langle \mathcal{G} = \left( G = (V, E), m, (S_{i})_{i \in V} \right), s = (s_{i})_{i \in V}, k \right\rangle$ be an instance of \textsc{Check Equilibrium}.
	We construct a graphical one-sided matching problem without exchange cost $\mathcal{M} = \left\langle \mathcal{A} = (N, \tau, \{ w: \tau \to \mathbb{Z}_{> 0} \}), \right. \allowbreak \left. \mathcal{I} = (A, \lambda), \{ v_{i}: A \to \mathbb{Z}_{\ge 0} \}_{i \in V} \right\rangle$ and an allocation $\pi: N \to A$.
	The construction of the agent graph $\mathcal{A}$ and the item graphs $\mathcal{I}$ is the same as the graph shown in the proof of Lemma \ref{Lemma_Reduction_from_Coordination_Games_to_One-Sided_Market}.
	Due to the different computational problems --- specifically, without exchange costs --- we modify the valuation function $v_{d}: A \to \mathbb{Z}_{\ge 0}$ for each dummy agent $d \in \bigcup_{i \in V} D_{i}$.
	For each player $i \in V$ and for each dummy agent $d_{i} \in D_{i}$, we define the new valuation function $v_{d_{i}}: A \to \mathbb{Z}_{\ge 0}$ as follows: For each item $a \in A$,
	
	\begin{align*}
		v_{d_{i}}(a) = \begin{cases}
			\Delta & \text{ if } a = a_{i, s_{i}},\\
			0 & \text{ otherwise.}
		\end{cases}
	\end{align*}
	
	\noindent
	Finally, we define the allocation $\pi: N \to A$ corresponding to the strategy profile $s = (s_{i})_{i \in V}$ as follows.
	For each player $i \in V$, $\pi(x_{i}) = a_{i, s_{i}}$. 
	For each player $i \in V$ and each dummy agent $d_{i}^{k} \in D_{i}$ with $k < \mathrm{rank}_{i}(s_{i})$, $\pi(d_{i}^{k}) = a_{i, \mathrm{rank}_{i}^{-1}(k)}$.
	For each player $i \in V$ and for each dummy agent $d_{i}^{k} \in D_{i}$ with $k \ge \mathrm{rank}_{i}(s_{i})$, $\pi(d_{i}^{k}) = a_{i, \mathrm{rank}_{i}^{-1}(k+1)}$.
	It is easy to see that the allocation $\pi$ is valid.
	
	What remains is to prove that the strategy profile $s$ is a $k$-equilibrium for the coordination game $\mathcal{G}$ if and only if the allocation $\pi$ is a $2k$-stable allocation for the graphical one-side matching problem $\mathcal{M}$.
	
	First, we assume that the strategy profile $s$ is not a $k$-equilibrium for the coordination game $\mathcal{G}$. Thus, we have a profitable deviation $s \mapsto_{K} s'$ satisfying that $|K| \le k$.
	We denote by $d_{i}$ the dummy agent assigned the item $s'(i)$ by the allocation $\pi$, i.e., $d_{i} := \pi^{-1}(s'(i))$.
	Then, we consider the coalitional exchange $(X, \mu) = \{x_{i} \leftrightarrow d_{i} \}_{i \in K}$.
	We now show that $(X, \mu)$ is blocking $\pi$.
	By definition, the utility of every dummy agent $d_{i}$ in $X$ is strictly increasing: \[ u_{d_{i}}(\pi^{(X, \mu)}) = \Delta > 0 = u_{d_{i}}(\pi). \]
	Since $s \mapsto_{K} s'$ is a profitable deviation, it holds that $p_{i}(s') > p_{i}(s)$ for each player $i$ in $K$. Hence, we have that for each player $i$ in $K$,
	
	\begin{align*}
		u_{x_{i}}(\pi^{(X, \mu)}) = \Delta + 2 \cdot p_{i}(s') > \Delta + 2 \cdot p_{i}(s) = u_{x_{i}}(\pi).
	\end{align*}
	
	\noindent
	Furthermore, it satisfies that $|X| \le 2k$. Therefore, the allocation $\pi$ is not a $2k$-stable allocation for $\mathcal{M}$.
	
	Next, we assume that the allocation $\pi: N \to A$ is not a $2k$-stable allocation for $\mathcal{M}$. Thus, we have a blocking exchange $(X, \mu)$ with the size $|X| \le 2k$.
	We show that there is a subset $K$ of players in $V$ with the size $|K| \le k$ such that $X := \bigcup_{i \in K} \{ x_{i}, d_{i} \}$, $\mu(x_{i}) = d_{i}$, and $\mu(d_{i}) = x_{i}$, where $d_{i} \in D_{i}$ for each player $i \in K$. For each player $i \in V$, the utility of the dummy agent $\hat{d}_{i} \in D_{i}$ is strictly increasing only when $\hat{d}_{i}$ receives the item $a_{i, s_{i}}$. Furthermore, when the agent $x_{i}$ receives any item in $A \setminus A_{i}$, the utility of $x_{i}$ is strictly decreasing.
	
	We denote by $s' = (s'_{i})_{i \in V} \in \mathbb{S}_{V}$ the strategy profile corresponding to $\pi^{(X, \mu)}$, i.e., for each $i \in K$, $s'_{i} := \pi^{(X, \mu)}(i)$ and for each $j \in V \setminus K$, $s'_{j} := \pi^{(X, \mu)}(j) = \pi(j)$.
	From Claim \ref{Claim_Utility}, it holds that \[ u_{x_{i}}(\pi^{(X, \mu)}) = \Delta + 2 \cdot p_{i}(s') > \Delta + 2 \cdot p_{i}(s) = u_{x_{i}}(\pi). \]
	This implies that $s \mapsto_{K} s'$ is a profitable deviation. Since $|K| \le k$, the strategy profile $s$ is not a $k$-equilibrium.
%\end{proof}

\section{Proof of Theorem \ref{Theorem_Existence_Two-Stable}}
\label{Appendix__Theorem_Existence_Two-Stable}
In this section, we prove Theorem \ref{Theorem_Existence_Two-Stable}.
That is, we show that every graphical one-sided matching with exchange cost has at least one $2$-stable allocation.

%\begin{proof}
	Let $\mathcal{M} = \langle \mathcal{A} = (N, \tau, \{ w: \tau \to \mathbb{R}_{> 0} \}), \mathcal{I} = (A, \lambda), \{ v_{i}: A \to \mathbb{R}_{\ge 0} \}_{i \in N}, \{ c_{i}: A \times A \to \mathbb{R} \}_{i \in N} \rangle$ be a graphical one-sided matching problem with exchange cost.
	We consider the same function $\phi: \Pi \to \mathbb{R}_{\ge 0}$ as that used in \cite{MS19}: For each allocation $\pi: N \to A$,
	
	\begin{align*}
		\phi(\pi) := \sum_{i \in N} \{ v_{i}(\pi(i)) + u_{i}(\pi) \}.
	\end{align*}
	
	\noindent
	We now prove that the function $\phi$ acts as a generalized ordinal potential function even where the cost functions exist.
	Thus, we show that for every allocation $\pi: N \to A$ and each permissible coalitional exchange $i \leftrightarrow j$, if $u_{i}(\pi^{i \leftrightarrow j}) - c_{i} \left( \pi(i), \pi^{i \leftrightarrow j}(i) \right) > u_{i}(\pi)$ and $u_{j}(\pi^{i \leftrightarrow j}) - c_{j} \left( \pi(j), \pi^{i \leftrightarrow j}(j) \right) > u_{j}(\pi)$ hold, then it satisfies that $\phi(\pi^{i \leftrightarrow j}) > \phi(\pi)$.
	
	We observe the key property of the potential function $\phi$ in Lemma \ref{Lemma_KeyProperty_of_Potential}.
	To prove this, we introduce some additional notations.
	Given an allocation $\pi: N \to A$, we define $\tau_{\pi}$ as $\{ \{ i, j \} \in \tau ~;~ \{ \pi(i), \pi(j) \} \in \lambda \}$.
	For any coalition $X$ of agents in $N$, we define $\tau[X] := \{ \{ x, y \} \in \tau ~;~ x, y \in X \}$ and $\hat{\tau}[X] := \{ \{ x, i \} \in \tau ~;~ x \in X \text{ and } i \in N \setminus X \}$.
%	We first show that the next lemma follows.
	
	\begin{lemma}
	\label{Lemma_KeyProperty_of_Potential}
		Let $\mathcal{M} = \langle \mathcal{A} = (N, \tau, \{ w: \tau \to \mathbb{R}_{> 0} \}), \mathcal{I} = (A, \lambda), \{ v_{i}: A \to \mathbb{R}_{\ge 0} \}_{i \in N}, \{ c_{i}: A \times A \to \mathbb{R} \}_{i \in N} \rangle$ be a graphical one-sided matching problem with exchange cost.
		For every coalitional exchange $(X, \mu)$ (not necessarily permissible), it holds that 
		\begin{align*}
			&\phi(\pi^{(X, \mu)}) - \phi(\pi) \\ &= 2 \left( \sum_{x \in X} u_{x}(\pi^{(X, \mu)}) - \sum_{x \in X} u_{x}(\pi) - \sum_{ \{ x, y \} \in \tau_{\pi^{(X, \mu)}} \cap \tau[X] } w( \{ x, y \} ) + \sum_{ \{ x, y \} \in \tau_{\pi} \cap \tau[X] } w( \{ x, y \}) \right).
%		\label{Equation_Key_Property}
		\end{align*}
	\end{lemma}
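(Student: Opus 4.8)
The plan is to expand $\phi$ into a valuation part and an edge part, exploit the fact that every active edge is counted from both of its endpoints, and then carefully track how each class of edges---internal to $X$, crossing the boundary of $X$, and lying entirely outside $X$---contributes when we pass from $\pi$ to $\pi^{(X,\mu)}$.

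First I would rewrite the potential in a more convenient form. Substituting the definition of $u_i$ into $\phi$ and observing that each active edge $\{i,j\}\in\tau_\pi$ is summed once from $i$'s neighbourhood term and once from $j$'s, one obtains
\begin{align*}
\phi(\pi) = 2\sum_{i\in N} v_i(\pi(i)) + 2\sum_{\{i,j\}\in\tau_\pi} w(\{i,j\}).
\end{align*}
This is the identity I would use throughout. Next I would compute $\phi(\pi^{(X,\mu)})-\phi(\pi)$ directly from it. Since $\pi^{(X,\mu)}$ agrees with $\pi$ off $X$, the valuation difference collapses to $2\sum_{x\in X}\bigl(v_x(\pi^{(X,\mu)}(x))-v_x(\pi(x))\bigr)$. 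For the edge part I would partition $\tau$ into edges with both endpoints outside $X$, boundary edges in $\hat{\tau}[X]$, and internal edges in $\tau[X]$; the first class has unchanged endpoints under $\pi^{(X,\mu)}$ and so its contribution cancels, leaving only the boundary and internal differences.

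Then I would expand $\sum_{x\in X} u_x(\pi^{(X,\mu)})-\sum_{x\in X} u_x(\pi)$ the same way. The crucial point is that in this restricted sum an internal edge in $\tau[X]$ is counted from both endpoints, hence with multiplicity two, whereas a boundary edge in $\hat{\tau}[X]$ is counted only once. Comparing this against the edge difference computed in the previous step shows that the internal edges are over-counted by exactly one copy---which is precisely the correction $-\sum_{\{x,y\}\in\tau_{\pi^{(X,\mu)}}\cap\tau[X]} w(\{x,y\}) + \sum_{\{x,y\}\in\tau_{\pi}\cap\tau[X]} w(\{x,y\})$ appearing in the statement. Substituting the expansion of $\sum_{x\in X} u_x(\pi^{(X,\mu)})-\sum_{x\in X} u_x(\pi)$ into the claimed right-hand side and cancelling one copy of the internal-edge terms against this correction then reproduces exactly the direct expression from the second step, establishing the identity.

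I expect the bookkeeping of edge multiplicities to be the only genuine obstacle. One must be careful that the overall factor of $2$ multiplying the right-hand side interacts correctly with the intrinsic double-counting inside $\sum_{x\in X}u_x$, and that the boundary edges---counted once in the local utilities but carrying the global factor in $\phi$---are accounted on the right side of the ledger. Once the three edge classes are separated cleanly, the identity follows by direct substitution with no estimation, and the lemma holds for an arbitrary coalitional exchange $(X,\mu)$ regardless of permissibility, as claimed.
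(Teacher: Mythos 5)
Your proposal is correct and follows essentially the same route as the paper's proof: the paper likewise expands $\phi(\pi) = 2\sum_{i} v_i(\pi(i)) + 2\sum_{\{i,j\}\in\tau_\pi} w(\{i,j\})$, partitions the edges into those inside $\tau[X]$, the boundary edges $\hat{\tau}[X]$, and edges outside $X$ (which cancel since $\pi^{(X,\mu)}$ agrees with $\pi$ off $X$), and recovers the stated identity by reconstituting $\sum_{x\in X} u_x$ with the internal-edge correction accounting for the double count. Your multiplicity bookkeeping (internal edges counted twice in $\sum_{x\in X}u_x$, boundary edges once) matches the paper's chain of equalities exactly.
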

	\begin{proof}
		\begin{align*}
			&\phi(\pi^{(X, \mu)}) - \phi(\pi) \\
				&= \sum_{i \in N} \left\{ v_{i}(\pi^{(X, \mu)}(i)) + u_{i}(\pi^{(X, \mu)}) \right\} - \sum_{i \in N} \left\{ v_{i}(\pi(i)) + u_{i}(\pi) \right\} \\
				&= 2 \left( \sum_{x \in X} v_{x}(\pi^{(X, \mu)}) - \sum_{x \in X} v_{x}(\pi) \right) + \sum_{i \in N} \sum_{ \{i, j\} \in \tau_{\pi^{(X, \mu)}}} w(\{i, j\}) - \sum_{i \in N} \sum_{ \{i, j\} \in \tau_{\pi}} w(\{i, j\}) \\
				&= 2 \left( \sum_{x \in X} v_{x}(\pi^{(X, \mu)}) - \sum_{x \in X} v_{x}(\pi) \right) \\
				&~~~~~~~~~	+ \sum_{i \in N \setminus X} \sum_{ \{i, j\} \in \tau_{\pi^{(X, \mu)}} \cap \hat{\tau}[X]} w(\{i, j\}) - \sum_{i \in N \setminus X} \sum_{ \{i, j\} \in \tau_{\pi} \cap \hat{\tau}[X]} w(\{i, j\}) \\
				&~~~~~~~~~	+ \sum_{x \in  X} \sum_{ \{x, j\} \in \tau_{\pi^{(X, \mu)}} \cap \hat{\tau}[X]} w(\{x, j\}) - \sum_{x \in X} \sum_{ \{x, j\} \in \tau_{\pi} \cap \hat{\tau}[X]} w(\{x, j\}) \\
				&~~~~~~~~~	+ \sum_{x \in  X} \sum_{ \{x, y\} \in \tau_{\pi^{(X, \mu)}} \cap \tau[X]} w(\{x, y\}) - \sum_{x \in X} \sum_{ \{x, j\} \in \tau_{\pi} \cap \tau[X]} w(\{x, y\}) \\
				&= 2 \left( \sum_{x \in X} v_{x}(\pi^{(X, \mu)}) - \sum_{x \in X} v_{x}(\pi) \right) \\
				&~~~~~~~~~	+ 2 \left( \sum_{x \in  X} \sum_{ \{x, j\} \in \tau_{\pi^{(X, \mu)}} \cap \hat{\tau}[X]} w(\{x, j\}) - \sum_{x \in X} \sum_{ \{x, j\} \in \tau_{\pi} \cap \hat{\tau}[X]} w(\{x, j\}) \right) \\
				&~~~~~~~~~	+ \sum_{x \in  X} \sum_{ \{x, y\} \in \tau_{\pi^{(X, \mu)}} \cap \tau[X]} w(\{x, y\}) - \sum_{x \in X} \sum_{ \{x, j\} \in \tau_{\pi} \cap \tau[X]} w(\{x, y\}) \\
				&= 2 \left( \sum_{x \in X} \left\{ v_{x}(\pi^{(X, \mu)}) + \sum_{ \{ x, y \} \in \tau_{\pi^{(X, \mu)}}} w(\{x, y\}) \right\} - \sum_{x \in X} \left\{ v_{x}(\pi) + \sum_{ \{ x, y \} \in \tau_{\pi}} w(\{x, y\}) \right\} \right) \\
				&~~~~~~~~~	- \sum_{x \in  X} \sum_{ \{x, y\} \in \tau_{\pi^{(X, \mu)}} \cap \tau[X]} w(\{x, y\}) + \sum_{x \in X} \sum_{ \{x, j\} \in \tau_{\pi} \cap \tau[X]} w(\{x, y\}) \\
				&= 2 \left( \sum_{x \in X} u_{x}(\pi^{(X, \mu)}) - \sum_{x \in X} u_{x}(\pi) - \sum_{ \{ x, y \} \in \tau_{\pi^{(X, \mu)}} \cap \tau[X] } w( \{ x, y \} ) + \sum_{ \{ x, y \} \in \tau_{\pi} \cap \tau[X] } w( \{ x, y \} ) \right).
		\end{align*}
		Therefore, this lemma follows.
	\end{proof}
	
	Now, we fix an allocation $\pi: N \to A$, and we assume that there is a permissible coalitional exchange $i \leftrightarrow j$ that is blocking $\pi$. Hence, the following three inequalities hold: $u_{i}(\pi^{i \leftrightarrow j}) - c_{i} \left( \pi(i), \pi^{i \leftrightarrow j}(i) \right) > u_{i}(\pi)$, $u_{j}(\pi^{i \leftrightarrow j}) - c_{j} \left( \pi(j), \pi^{i \leftrightarrow j}(j) \right) > u_{j}(\pi)$, and $c_{i} \left( \pi(i), \pi^{i \leftrightarrow j}(i) \right) + c_{j} \left( \pi(j), \pi^{i \leftrightarrow j}(j) \right) \ge 0$.
	Therefore, we have that \[ \left( u_{i}(\pi^{i \leftrightarrow j}) + u_{j}(\pi^{i \leftrightarrow j}) \right) - \left( u_{i}(\pi) + u_{j}(\pi) \right) > c_{i} \left( \pi(i), \pi^{i \leftrightarrow j}(i) \right) + c_{j} \left( \pi(j), \pi^{i \leftrightarrow j}(j) \right) \ge 0. \]
	Furthermore, every coalitional exchange $i' \leftrightarrow j'$, it is easy to see that $\{ i', j' \} \in \tau_{\pi}$ if and only if $\{ i', j' \} \in \tau_{\pi^{i' \leftrightarrow j'}}$.
	This implies that \[ \sum_{ \{ x, y \} \in \tau_{\pi} \cap \tau[\{i, j \}] } w( \{ x, y \}) = \sum_{ \{ x, y \} \in \tau_{\pi^{i \leftrightarrow j}} \cap \tau[\{ i, j\}] } w( \{ x, y \} ). \]
	From the above observation, it holds that $\phi(\pi) < \phi(\pi^{i \leftrightarrow j})$ if an allocation $\pi: N \to A$ has a permissible coalitional exchange $i \leftrightarrow j$ that is blocking $\pi$.
	Therefore, the function $\phi$ acts as a generalized ordinal potential function.

	Let $\pi: N \to A$ be an allocation satisfying that $\phi(\pi) \ge \phi(\pi^{i \leftrightarrow j})$ for every permissible exchange $i \leftrightarrow j$.
	Since the function $\phi$ is a generalized ordinal potential function, it satisfies at least one of $u_{i}(\pi^{i \leftrightarrow j}) - c_{i} \left( \pi(i), \pi^{i \leftrightarrow j}(i) \right) \le u_{i}(\pi)$ and $u_{j}(\pi^{i \leftrightarrow j}) - c_{j} \left( \pi(j), \pi^{i \leftrightarrow j}(j) \right) \le u_{j}(\pi)$ for each permissible exchange $i \leftrightarrow j$. This implies that the allocation $\phi$ is $2$-stable. 
	Note that the number of allocations is finite. Hence, such an allocation always exists.
%\end{proof}

\section{Proof of Lemma \ref{Lemma_Find_Two_Stable_in_PLS}}
\label{Appendix__Lemma_Find_Two_Stable_in_PLS}
In this section, we prove Lemma \ref{Lemma_Find_Two_Stable_in_PLS}.
That is, we show that the problem of finding a $2$-stable allocation of a graphical one-sided matching problem with exchange cost belongs to $\mathtt{PLS}$.
To prove this, we show a polynomial-time reduction from \textsc{Find $2$-Stable Allocation} to \textsc{Basic PLS Problem}.

Let $\mathcal{M}$ be an instance of \textsc{Find $2$-Stable Allocation}. Here, we denote by $N$ the agent set of $\mathcal{M}$.
For each allocation $\pi \in \Pi$, the neighborhood structure $\mathcal{N}(\pi)$ is defined as $\{ \pi^{i \leftrightarrow j} ~;~$ for all (unordered) permissible pairs $(i, j) \in N \times N \}$.	
Since the size of the neighborhood structure $\mathcal{N}(\pi)$ is $O(|N|^{2})$, we can verify whether a blocking pair $i \leftrightarrow j$ in $\pi$ exists in polynomial time.	
From the proof of Theorem \ref{Theorem_Existence_Two-Stable}, the local maximum point of $\phi$ is a $2$-stable allocation.

\section{Proof of Lemma \ref{Lemma_PLS-Hardness_for_Five-Degree}}
\label{Appendix__Lemma_PLS-Hardness_for_Five-Degree}
In this section, we prove that Lemma \ref{Lemma_PLS-Hardness_for_Five-Degree}.
That is, we show the $\mathtt{PLS}$-hardness of the problem \textsc{Find $2$-Stable Allocation} even if the maximum degree of the underlying agent graph is at most five.

%\begin{proof}
	To prove this, we now show a polynomial-time reduction from the problem \textsc{Local Max-Cut} to the problem \textsc{Find $2$-Stable Allocation}.
	The problem \textsc{Local Max-Cut} is a total search problem: Given an edge-weighted undirected graph $\mathcal{G} = (V, E, \{ \omega: E \to \mathbb{Z}_{> 0} \})$, find an assignment $x \in \{ 0, 1 \}^{V}$ such that for each vertex $v \in V$, \[ \sum_{ \{ v, u \} \in E ~:~ x(v) \neq x(u) } \omega(\{ v, u \}) \ge \sum_{ \{ v, u \} \in E ~:~ x(v) = x(u) } \omega(\{u, v\}). \]
	The problem \textsc{Local Max-Cut} is one of the well-known $\mathtt{PLS}$-complete problems. Specifically, Els\"{a}sser and Tscheuschner \cite{ET11} have shown that the problem \textsc{Local Max-Cut} is $\mathtt{PLS}$-complete even if the maximum degree of the underlying graph is at most five.
	
	\begin{theorem}[Els\"{a}sser and Tscheuschner \cite{ET11}]
	\label{Theorem_MaxcutGame}
		The problem \textsc{Local Max-Cut} is $\mathtt{PLS}$-complete even if the maximum degree of the underlying graph is at most five.
	\end{theorem}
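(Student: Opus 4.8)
The plan is to establish the two halves of the claim separately: membership in $\mathtt{PLS}$, and $\mathtt{PLS}$-hardness under the maximum-degree-five restriction.

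For membership, I would cast \textsc{Local Max-Cut} directly as a \textsc{Basic PLS Problem}. The feasible solutions are the assignments $x \in \{0,1\}^V$, the objective $p(x)$ is the total cut weight $\sum_{\{u,v\} \in E : x(u) \neq x(v)} \omega(\{u,v\})$, and the neighborhood of $x$ is the set of assignments obtained by flipping the side of a single vertex. The initial-solution algorithm $\mathtt{A}$ returns the all-zero assignment; the evaluation algorithm $\mathtt{B}$ computes $p$ by summing over $E$; and the improvement algorithm $\mathtt{C}$ scans the $|V|$ single-vertex flips, computes for each the change in cut weight (a local sum over incident edges), and either returns a strictly improving flip or certifies local optimality. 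Every step runs in polynomial time, and the degree bound merely restricts the instance class, so this direction is immediate.

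For hardness, I would give a PLS-reduction from general (unbounded-degree) \textsc{Local Max-Cut}, which is $\mathtt{PLS}$-complete by Schäffer and Yannakakis \cite{SY91}, to the degree-five restriction. The instance map $f$ replaces every vertex $v$ of degree exceeding five by a bounded-degree \emph{consistency gadget}: a small cluster of copies $v_1, \dots, v_t$ joined by heavy-weight edges, with the edges originally incident to $v$ distributed among the copies so that no copy exceeds degree five. Because a cut edge rewards placing its endpoints on \emph{opposite} sides, the gadget cannot simply glue the copies together with heavy edges; instead I would realize it as a fixed tree or chain in which the heavy edges encode a rigid \emph{parity} relationship, routing each original incident edge to a copy whose parity is corrected (by inserting a one-step relay where needed) so that, after projection, the gadget behaves exactly like the single original vertex. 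The solution map $g$ reads off the common ``value'' of each gadget from any configuration of the target instance and returns the induced assignment to the original vertices.

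The main obstacle is proving that $g$ carries local optima to local optima, and this is where the weights must be engineered with care. Two properties have to be forced simultaneously in every local optimum of the transformed instance: first, each gadget must be internally \emph{consistent}, i.e. all copies respect the intended parity pattern, which I would secure by choosing the internal heavy weights to dominate the total external weight, so that any parity violation admits a strictly improving single flip; and second --- the genuinely delicate point --- a single-vertex flip that would improve the \emph{original} cut corresponds to flipping an entire gadget at once, yet the target neighborhood permits only one flip per step. The construction must therefore guarantee a \emph{cascade}: whenever the global gadget flip is profitable, some single copy or relay already has a strictly improving flip out of the consistent configuration, so no local optimum of the target can project to a non-optimal assignment of the source. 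Verifying this cascade while keeping every gadget vertex at degree at most five is the crux of the argument, and it is precisely what makes the bound of five, rather than some larger constant, nontrivial.
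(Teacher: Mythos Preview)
The paper does not prove this theorem; it is quoted from Els\"{a}sser and Tscheuschner \cite{ET11} and invoked as a black box inside the proof of Lemma~\ref{Lemma_PLS-Hardness_for_Five-Degree}. There is no in-paper argument to compare your proposal against.

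On the merits of your sketch: membership is standard and correct. The hardness direction, however, has a genuine gap at precisely the point you flag as ``the crux.'' With a chain or tree of copies joined by heavy edges, local optimality does force the intended alternating pattern, and parity relays can recover the original incidences. But the cascade fails for this gadget. If the internal weights are large enough to enforce consistency, then from any consistent configuration every single-copy flip loses at least one heavy internal cut edge; when the original vertex admits a profitable flip, the total external gain is spread across the copies, and no individual copy's share need exceed that heavy internal penalty. Concretely, an endpoint copy $v_1$ with one internal edge of weight $W$ and external edges of total weight at most $w$ has a profitable flip only if $w>W$, while consistency requires $W>w$. Your target instance can therefore have local optima that project to non-local-optima of the source, which invalidates the $\mathtt{PLS}$ reduction. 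You state that ``the construction must therefore guarantee a cascade,'' but you give no construction that does so --- a bare path with dominating internal weights demonstrably does not --- and the tension between ``heavy enough to force consistency'' and ``light enough to permit a profitable first flip'' is intrinsic to this kind of local-replacement scheme. The published proof in \cite{ET11} is substantially more involved than the degree-reduction outline you propose.
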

	
	Our reduction is inspired by the corresponding proof by Massand and Simon \cite{MS19}.
	We also use a complete bipartite graph in which both sides have the same number of nodes with the number of agents as the item graph. Due to prove the stronger hardness results, we modify the agent graph.

	Let $\mathcal{G} = (V, E, \{ \omega: E \to \mathbb{Z}_{> 0} \})$ be a \textsc{Local Max-Cut} instance.
	We construct an instance $\mathcal{M}$ of a graphical one-sided matching problem with exchange costs from $\mathcal{G}$.
	First, we describe the structure of the item graph $\mathcal{I} = (A, \lambda)$. The set of items $A$ is a union of the sets of $|V|$ zero-labeled items $A_{0}$ and $|V|$ one-labeled items $A_{1}$, i.e., $A = A_{0} \cup A_{1}$. The set of items on the item graph is $\lambda = \left\{ \{ a_{0}, a_{1} \} ~;~ \forall a_{0} \in A_{0}, \forall a_{1} \in A_{1} \right\}$. Thus, the item graph $\mathcal{I}$ is a complete bipartite graph between zero-labeled items and one-labeled items.
	Second, we describe the structure of the agent graph $\mathcal{A} = (N, \tau, \{ w: \tau \to \mathbb{Z}_{> 0} \})$. The set of agents $N$ is a union of the original vertex set $V$ and the set of $|V|$ dummy agents $D = \{ d_{1}, d_{2}, \dots, d_{|V|} \}$, i.e., $N = V \cup D$. The structure of the agent graph $\mathcal{A}$ is almost the same as the original graph $G$. The set of edges $\tau$ is defined as $E$, i.e., every dummy agent is isolated. The edge-weight function $w: \tau \to \mathbb{Z}_{> 0}$ is defined by $w(e) = 2 \cdot \omega(e)$ for each edge $e \in E$.
	Finally, we define the valuation function $v_{i}: A \to \mathbb{Z}_{\ge 0}$ and the cost function $c_{i}: A \times A \to \mathbb{Z}$ for each agent $i$ in $N$.
	Each agent $i_{v}$ in $V$ has the valuation function $v_{i_{v}}(a) = 0$ for every item $a$ in $A$ and the cost function $c_{i_{v}}$ such that for each pair of two items $a$ and $b$ in $A$,
	
	\begin{align*}
		c_{i_{v}}(a, b) = \begin{cases}
			0 & \text{if both $a$ and $b$ are in $A_{0}$,}\\
			0 & \text{if both $a$ and $b$ are in $A_{1}$,}\\
			1 & \text{otherwise.}
		\end{cases}
	\end{align*}
	
	\noindent
	Each agent $i_{d}$ in $D$ has the valuation function $v_{i_{d}}(a) = 0$ for every item $a$ in $A$ and the cost function $c_{i_{d}}$ such that for each pair of two items $a$ and $b$ in $A$,
	
	\begin{align*}
		c_{i_d}(a, b) = \begin{cases}
			0 & \text{if both $a$ and $b$ are in $A_{0}$,}\\
			0 & \text{if both $a$ and $b$ are in $A_{1}$,}\\
			-1 & \text{otherwise.}
		\end{cases}
	\end{align*}
	
	\noindent
	We complete constructing the graphical one-sided matching problem with exchange cost $\mathcal{M}$ corresponding to the given \textsc{Local Max-Cut} instance $\mathcal{G}$.
	It is easy to see that our reduction is polynomial-time computable.

	What remains is to prove that we can efficiently recover a solution of $\mathcal{G}$ from a $2$-stable allocation on $\mathcal{M}$.
	Let $\pi: N \to A$ be a $2$-stable allocation for $\mathcal{M}$.
	Then, the corresponding assignment $x_{\pi} \in \{0, 1\}^{V}$ is defined to be that for each vertex $i \in V$,
	
	\begin{align*}
		x_{\pi}(i) = \begin{cases}
			0 & \text{if $\pi(i) \in A_{0}$,}\\
			1 & \text{if $\pi(i) \in A_{1}$.}
		\end{cases}
	\end{align*}
	
	We now prove that $x_{\pi}$ is a solution of $\mathcal{G}$.
	For the sake of contradiction, we assume that the assignment $x_{\pi}$ is not a solution of $\mathcal{G}$.
	Thus, there exists a vertex $i \in V$ such that
	\begin{align*}
		\sum_{ \{i, j\} \in E ~:~ x_{\pi}(i) = x_{\pi}(j)} \omega(\{i, j\}) > \sum_{ \{i, j\} \in E ~:~ x_{\pi}(i) \neq x_{\pi}(j)} \omega(\{i, j\}).
	\end{align*}
	Since each weight on edge in $E$ is a positive integer, this inequality implies that
	\begin{align*}
		\sum_{ \{i, j\} \in E ~:~ x_{\pi}(i) = x_{\pi}(j)} \omega(\{i, j\}) - \sum_{ \{i, j\} \in E ~:~ x_{\pi}(i) \neq x_{\pi}(j)} \omega(\{i, j\}) \ge 1.
	\end{align*}
	
	Without loss of generality, we suppose that $x_{\pi}(i) = 0$.	
	By definition, the agent $i$ is assigned the item $\pi(i)$ in $A_{0}$.
	Note that the allocation $\pi$ is a bijection. From the pigeonhole principle, there must exist a dummy agent $d \in D$ assigned the item $\pi(d)$ in $A_{1}$ by $\pi$.
	Furthermore, the pair $i$ and $d$ is permissible since $c_{i}(\pi(i), \pi(d)) + c_{d}(\pi(d), \pi(i)) = 0$.
	The following two inequalities hold:
	
	\begin{align*}
		u_{d}(\pi^{i \leftrightarrow d}) - c_{d}(\pi(d), \pi^{i \leftrightarrow d}(d)) = 1 > 0 = u_{d}(\pi);
	\end{align*}
	
	\noindent
	and
	
	\begin{align*}
		u_{i}(\pi^{i \leftrightarrow d}) - c_{i}(\pi(i), \pi^{i \leftrightarrow d}(i))
			&= \sum_{ e = \{j, i\} \in \tau ~:~ \{ \pi^{i \leftrightarrow d}(j), \pi^{i \leftrightarrow d}(i) \} \in \lambda } w(e) - 1\\
			&= \sum_{ e = \{j, i\} \in \tau ~:~ \{ \pi^{i \leftrightarrow d}(j), \pi^{i \leftrightarrow d}(i) \} \in \lambda } 2 \cdot \omega(e) - 1\\
			&= 2 \cdot \left\{ \sum_{ e = \{j, i\} \in \tau ~:~ \{ \pi^{i \leftrightarrow d}(j), \pi^{i \leftrightarrow d}(i) \} \in \lambda } \omega(e) - 1 \right\} + 1\\
			&\ge 2 \cdot \sum_{ e = \{j, i\} \in \tau ~:~ \{ \pi(j), \pi(i) \} \in \lambda } \omega(e) + 1\\
			&> \sum_{ e = \{j, i\} \in \tau ~:~ \{ \pi(j), \pi(i) \} \in \lambda } 2 \cdot \omega(e)\\
			&= \sum_{ e = \{j, i\} \in \tau ~:~ \{ \pi(j), \pi(i) \} \in \lambda } w(e)
			= u_{v}(\pi)
	\end{align*}
	
	\noindent
	This contradicts our assumption that $\pi$ is a $2$-stable allocation.
	Hence, the assignment $x_{\pi}$ is a solution of the \textsc{Local Max-Cut} instance $\mathcal{G}$.
	We complete the proof of this theorem from Theorem \ref{Theorem_MaxcutGame}.
%\end{proof}

\begin{remark}
	Massand and Simon \cite{MS19} have proven the $\mathtt{PLS}$-hardness of a graphical one-sided matching problem by a polynomial-time reduction from \textsc{Local Max-Cut}. Remark that their proof can be viewed as a polynomial-time reduction from a \textsc{Uniform Balanced Local Max-Cut}, a variant of \textsc{Local Max-Cut} in which we require any solution additionally holds that the numbers of vertices assigned $0$ and $1$ are the same.
	Formally, the problem \textsc{Uniform Balanced Local Max-Cut} is a total search problem: Given an edge-weighted undirected graph $\mathcal{G} = (V, E, \{ \omega: E \to \mathbb{Z}_{> 0} \})$, find an assignment $x \in \{ 0, 1 \}^{V}$ such that for each vertex $v \in V$, $\sum_{ \{ v, u \} \in E ~:~ x(v) \neq x(u) } \omega(\{ v, u \}) \ge \sum_{ \{ v, u \} \in E ~:~ x(v) = x(u) } \omega(\{u, v\})$, and it satisfies that $|\{ v \in V ~;~ x(v) = 1 \}| = |\{ v \in V ~;~ x(v) = 0 \}|$.
	It is well known that Uniform Balanced Local Max-Cut is PLS-complete even if the maximum degree of the underlying graph is at most six \cite{ET11, SY91}.
\end{remark}

\section{Proof of Theorem \ref{Theorem_Polynomial-time_Computable}}
\label{Appendix__Theorem_Polynomial-time_Computable}
In this section, we prove Theorem \ref{Theorem_Polynomial-time_Computable}. Thus, we show that we have a polynomial-time algorithm for computing a core-stable allocation of a non-graphical one-sided matching problem with exchange cost.

%\begin{proof}
	It is straightforward to see that a non-graphical one-sided matching problem with exchange cost is polynomially equivalent\footnote{For any two computational problems $A$ and $B$, we say that $A$ is polynomially equivalent to $B$ if $A$ and $B$ are polynomial-time reducible to each other.} to a graphical one-sided matching problem with exchange cost that satisfies at least one of the following two properties:
	\begin{enumerate}
		\item The underlying agent graph is either edge-empty or a uniform-weighted complete graph; and
		\item the underlying item graph is either edge-empty or complete.
	\end{enumerate}
	Therefore, in this proof, we show that we have a polynomial-time algorithm to find a core-stable allocation on a non-graphical one-sided matching problem with exchange cost.
	
	Let $\mathcal{M} = \left\langle N, A, \{v_{i}: A \to \mathbb{Z}_{\ge 0} \}_{i \in N}, \{ c_{i}: A \times A \to \mathbb{Z} \}_{i \in N} \right\rangle$ be an instance of a non-graphical one-sided matching problem with exchange cost.
	We construct an instance of maximum weight matching in a bipartite graph.
	We consider the complete bipartite graph $G = (N \cup A, E)$, i.e., for each $(i, a) \in N \times A$, $\{i, a\} \in E$.
	The edge-weight function $w: E \to \mathbb{Z}_{> 0}$ is defined as $w(i, a) = v_{i}(a) + 1$ for all $(i, a) \in N \times A$.
	
	Now, we prove that a maximum weight perfect matching on the graph $\left( G, w \right)$ is a core-stable allocation for the original instance $\mathcal{M}$.
	Let $x \in \{0, 1\}^{N \times A}$ be a maximum weight perfect matching on $\left( G, w \right)$, that is, the following properties holds: For all $i \in N$, $\sum_{a \in A} x(i, a) = 1$ and for all $a \in A$, $\sum_{i \in N} x(i, a) = 1$.
	We define the allocation $\pi_{x}: N \to A$ associated with $x$ as $\pi_{x}(i) = a$ if and only if $x(i, a) = 1$.
	
	For the sake of contradiction, we assume that there is a permissible coalitional exchange $(X, \mu)$ that is blocking $\pi_{x}$. Thus, it satisfies that $v_{\xi} \left( \pi_{x}^{(X, \mu)}(\xi) \right) - c_{\xi} \left( \pi_{x}(\xi), \pi_{x}^{(X, \mu)}(\xi) \right) > v_{\xi}(\pi_{x}(\xi))$ for each agent $\xi \in X$.
	Hence, we have that
	
	\begin{align*}
		\sum_{p \in N} v_{p}(\pi_{x}^{(X, \mu)}(p)) - \sum_{p \in N} v_{p}(\pi_{x}(p))
			&= \sum_{\xi \in X} \left\{ v_{\xi}(\pi_{\xi}^{(X, \mu)}(\xi)) - v_{\xi}(\pi_{x}(\xi)) \right\} \\
			&>  \sum_{\xi \in X} c_{\xi}(\pi_{x}(\xi), \pi_{x}^{(X, \mu)}(\xi)) \\
			&\ge 0,
	\end{align*}
	
	\noindent
	where the final inequality follows from that $(X, \mu)$ is permissible.
	
	Now, we define another perfect matching $y \in \{0, 1\}^{N \times A}$ as $y(p, a) = 1$ if and only if $\pi_{x}'(p) = a$.
	Then, we have that
	
	\begin{align*}
		\sum_{(p, a) \in N \times A} w(p, a) \cdot x(p, a) &- \sum_{(p, a) \in N \times A} w(p, a) \cdot y(p, a)\\
			&= \sum_{(p, a) \in N \times A} \left( v_{p}(a) + 1 \right) \cdot x(p, a) - \sum_{(p, a) \in N \times A} \left( v_{p}(a) + 1 \right) \cdot y(p, a)\\
			&= \sum_{p \in N} v_{p}(\pi_{x}'(p)) - \sum_{p \in N} v_{p}(\pi_{x}(p))\\
			&> 0.
	\end{align*}	
	
	\noindent
	This implies that $x$ is not a maximum weight perfect matching, which is a contradiction.
	
	Finally, we can solve the maximum weight perfect matching in complete bipartite matching in polynomial time \cite{DS12}. Hence, we can also obtain a core-stable allocation for $\mathcal{M}$.
%\end{proof}

\end{document}